\documentclass[journal,draftcls,onecolumn,12pt,twoside]{IEEEtran}
\usepackage{graphicx,multirow}
\usepackage{epsfig}
\usepackage{algorithm}
\usepackage{algorithmic}
\usepackage{cite}
\usepackage{amssymb}
\usepackage{amsthm}
\usepackage{amsmath}
\usepackage{color}
\usepackage{bbm}
\usepackage{cite}
\usepackage{epstopdf}
\usepackage[table,xcdraw]{xcolor}
\newtheorem{theorem}{\textbf{\text{Theorem}}}
\newtheorem{lemma}{\textbf{\text{Lemma}}}

\newtheorem{corollary}{Corollary}
\newtheorem{approximation}{Approximation}

\newtheorem{remark}{Remark}
\usepackage{subfigure}
\usepackage{collcell, datatool}
\usepackage{pifont}

\definecolor{lightgray}{gray}{0.9}
\usepackage{algorithm}
\usepackage{algorithmic}
\usepackage[algo2e]{algorithm2e}
\usepackage{mathrsfs}

\setlength{\textfloatsep}{1\baselineskip plus 0.2\baselineskip minus 2\baselineskip}

\begin{document}

\title{Spatiotemporal Model for Uplink IoT Traffic: Scheduling \& Random Access Paradox \\
}
\author{
\IEEEauthorblockN{\large  Mohammad Gharbieh, Hesham ElSawy, Hong-Chuan Yang, \\ Ahmed Bader, and Mohamed-Slim Alouini\\
\thanks{ M. Gharbieh and  H.-C. Yang are with the Department of Electrical and Computer Engineering, University of Victoria, Victoria, BC V8P 5C2, Canada; e-mail: \{mohammadgharbieh, hy\}@uvic.ca. { H. ElSawy and M.-S. Alouini are with the Computer, Electrical, and Mathematical Sciences and Engineering (CEMSE) Division, King Abdullah University of Science and Technology (KAUST), Thuwal 23955, Saudi Arabia; e-mails: \{hesham.elsawy, ahmed.bader, slim.alouini\}@kaust.edu.sa. {A. Bader is with Insyab Wireless Limited, 1961 Dubai - UAE; e-mail: ahmed@insyab.com.}}}
}}
\maketitle
\thispagestyle{plain}
\pagestyle{plain}

\begin{abstract}
The Internet-of-things (IoT) is the paradigm where anything will be connected. There are two main approaches to handle the surge in the uplink (UL) traffic the IoT is expected to generate, namely, Scheduled UL (SC-UL) and  random access uplink (RA-UL) transmissions. SC-UL is perceived as a viable tool to control Quality-of-Service (QoS) levels while entailing some overhead in the scheduling request prior to any UL transmission. On the other hand, RA-UL is a simple single-phase transmission strategy. While this obviously eliminates scheduling overheads, very little is known about how scalable RA-UL is. At this critical junction, there is a dire need to analyze the scalability of these two paradigms. To that end, this paper develops a spatiotemporal mathematical framework to analyze and assess the performance of SC-UL and RA-UL. The developed paradigm jointly utilizes stochastic geometry and queueing theory. Based on such a framework, we show that the answer to the {\em ``scheduling vs. random access paradox''} actually depends on the operational scenario. Particularly, RA-UL scheme offers low access delays but suffers from limited scalability, i.e., cannot support a large number of IoT devices. On the other hand, SC-UL transmission is better suited for higher device intensities and traffic rates.

\begin{IEEEkeywords}
Internet of Things (IoT), Uplink transmission, Grant-based access, Grant-free access, Stochastic geometry, Queueing theory, Stability analysis, Interacting queues.
\end{IEEEkeywords}

\end{abstract}

\section{Introduction}
\IEEEPARstart The Internet-of-things (IoT) is penetrating to different vertical sectors (e.g., smart cities, public safety, healthcare, autonomous deriving, etc.) which will bring billions of new devices to the already congested wireless spectrum. A recent report from ABI Research predicts that 75\% of the growth in wireless connections between today and the end of the decade will come from non-hub devices, e.g., sensor nodes \cite{ABI.IoT}. Scalability is one of the major challenges for the next frontier of IoT networks \cite{what.will.5g.be}. Such scalability is essential to accommodate the surging machine-type communications within the proliferating IoT applications. Accordingly, new wireless technologies should be developed to serve such unprecedented traffic, which is essentially a blend of human-type and machine-type communications. The challenge is more acute in the uplink (UL) direction since most of the IoT applications are UL-centric by nature~\cite{Bader_magazine}. This underlines the utter need {for increasing the UL data transmission capacities and the efficiency of medium access schemes~\cite{Bader_magazine, Goodbye, MD_magazine}}. It is a matter of fact that the contention over scarce resources for medium access escalates substantially as the number of devices and traffic intensity grow. This can cause excessive access delays leading to frequent packet drops \cite{RACH2}.

In the realm of Release-13, Release-14, and beyond, the 3GPP provisions few IoT-specific technologies (e.g., EC-GSM-IoT, LTE-eMTC, and NB-IoT) in order to accommodate IoT traffic~\cite{Release14, Primer_NBIOT}.  The 3GPP solutions adopt a scheduled UL (SC-UL) approach for the sake of interference management and guaranteed QoS provisioning. The SC-UL involves a random access scheduling request (RA-SR) prior to resource allocations. This is because the BS should be first notified upon data generation at the device buffer. The RA-SRs are not supervised by the BS and are subject to intracell and intercell interference. The BS then provides exclusive access transmission (EA-Tx) resource blocks for successful RA-SRs, and hence, EA-Tx transmission does not experiences intracell interference. Note that a single successful RA-SR can {secure} EA-Tx over several subsequent frames.  Given the sporadic traffic of the IoT devices, such two-phase SC-UL (i.e., RA-SR then EA-Tx) scheme may impose an unnecessary delay. The RA-SR overhead  becomes significant for shorter EA-Tx transmission periods. Furthermore, the multi-phase handshaking processes (i.e., scheduling request, scheduling response, resource allocation, and EA-Tx transmissions) impose longer wakeup time and data processing for the IoT devices, which shortens the battery lifetime of the IoT devices~\cite{LORA-NBIOT}. To alleviate such delay, signaling, and power consumption overhead, several low-power-wide-area (LPWA) networks (e.g., LoRa and Sigfox) adopt a single phase random access UL (RA-UL) data transmission scheme~\cite{LORA-NBIOT, Goodbye}. Each IoT device persistently transmits the data from its buffer over a randomly selected  resource block. Relying on the sporadic data pattern at each of the IoT devices, prompt RA-UL data transmission is expected to help devices to flush their buffer soon after data generation. Consequently, RA-UL scheme has the potential to reduce transmission delay, however, due to the probability that more than one IoT device may choose the same resource block, the data transmission is exposed to intracell interference. As such, the SC-UL scheme {experience} intracell interference in the RA-SR phase only whilst the  RA-UL scheme {may suffer}  intracell interference in every data transmission. Since each transmission scheme has its own intuitive merits, { there is }a pressing need for a mathematical framework that characterizes the tradeoff between both transmission schemes and identifies the effective operational scenario of each scheme.

\subsection{Related Work}
\vspace{-2mm}
In this context, stochastic geometry serves as the baseline model for massive and interference limited networks~(see \cite{WinPinShe:J09, Haenggi_Survey, survey_h, hesham_tutorial} and the references therein). However, stand-alone stochastic geometry analysis fails to account for the temporal aspects of the IoT such as traffic intensity per device and data accumulation in buffers. The temporal aspects within scheduling problems are usually captured via interacting queueing models~\cite{Neely2, Tassiulas1, Modiano1}, however, the work in \cite{Neely2, Tassiulas1, Modiano1} does not account for the spatial aspects (e.g., node density and mutual interference) that govern the interactions between the queues. Recently, spatiotemporal models that integrate stochastic geometry and queueing theory are proposed to jointly account for per-device traffic intensity, spatial device density, medium access control (MAC) scheme, devices' buffer states, and the mutual interference between devices\textcolor{red}{\cite{8198810,7486114,Chisci,Nardelli_Stability, Zhuang,8335767, Tony,  833,8439089, Gharbieh_tcom }}. Thus, {the IoT network can be abstracted by} a network of spatially interacting queues, where the interactions among the devices are governed by a signal-to-interference-plus-noise-ratio (SINR) capture model. However, \cite{8198810,7486114,Chisci,Nardelli_Stability, Zhuang,8335767} focus on ad hoc networks. Specifically, \cite{7486114,8198810,Chisci} discus the stability of ad hoc networks. The work in \cite{Nardelli_Stability} investigates throughput optimization in ad hoc networks with spatial randomness. \cite{Zhuang} assess the performance of ad hoc networks with unsaturated traffic in terms of the outage probability and the average packet delay. The authors in \cite{8335767} analyze the tradeoff between the delay and the security performance in ad hoc networks. The work in \cite{Tony} addresses downlink scheduling, which does not involve an RA-SR phase because the BS encloses the data queue and is responsible for scheduling. \textcolor{red} {The work in \cite{833,8439089}  analyzes the Random Access CHannel (RACH) performance in cellular-based IoT networks which does not involve EA-Tx phase for data transmission.} The UL scenario where the data queue is at the device side is considered in~\cite{Gharbieh_tcom}. However, only RA-UL data transmission with power ramping and transmission deferral are considered.  To the best of the authors' knowledge, the 3GPP based SC-UL, with joint RA-SR then EA-Tx phases, for IoT UL traffic is never addressed in the literature. Furthermore, there are no existing studies that assess and compare the 3GPP based SC-UL (i.e., grant-based access) and the LPWA based RA-UL (i.e., grant-free access) schemes for IoT networks. 

\vspace{-2mm}
\subsection{Contributions}
\vspace{-2mm}
This paper presents a novel spatiotemporal model for IoT UL communications to characterize the SC-UL and RA-UL transmission schemes.\footnote{This work is presented in part in~\cite{Gharbieh_ICC}.} From the spatial perspective, the BSs and devices are modeled via two independent Poisson point processes (PPPs). Besides its simplicity and tractability, the PPP model is validated via several experimental and theoretical studies~\cite{WinPinShe:J09, survey_h, hesham_tutorial}. From the temporal perspective, each communication link is modeled via a discrete time Markov chain (DTMC) to track packets generation and departure from devices' buffers. Consequently, the overall developed spatiotemporal framework models the IoT devices as a network of spatially interacting DTMCs, where the interactions are governed by the SINR capture model. To this end, the developed mathematical framework is used to characterize the scalability of SC-UL and RA-UL transmission schemes. The scalability of the network is investigated via the spatiotemporal traffic demand along with the notion of queueing stability. Within this context, a stable buffer (i.e., queue) is the one that has packet departure probability greater than the packet arrival probability \cite{loyens}. Otherwise, the number of packets in the devices' buffers would grow indefinitely driving these buffers (i.e., the network) to a state of  { ``instability"}. As such, an IoT network can scale in terms of devices and/or traffic rates as long as it still falls within the stability region. In this paper, scalability is characterized by the Pareto-frontier of all pairs of devices density and per-device traffic intensity that keeps the devices' buffers stable. Consequently, the terms scalability and stability are hereafter interchangeable. The contributions of the paper can be summarized as follows:

 \begin{itemize}
 \item SC-UL transmission with the joint RA-SR and EA-Tx phases of the 3GPP is modeled via a tandem queueing approach. An SINR capture model is adopted, where the SINR has to exceed a given threshold for packet departure. Since several devices served by a given BS may simultaneously select the same resource  for RA-SR, only the device with the highest SINR succeeds if its SINR exceeds the RA-SR SINR threshold. The EA-Tx phase enforces a single device per channel per BS, and hence, the UL SINR threshold is the only constraint for transmission success.

  \item The RA-UL scheme is modeled via a baseline DTMC. An SINR capture model is adopted, where the SINR has to exceed a given threshold for packet departure. Since several devices served by a given BS may simultaneously utilize the same resource for RA-UL, only the device with the highest SINR succeeds if its SINR exceeds the UL SINR threshold. It is important to note that the developed RA-UL model in this paper is different from \cite{Gharbieh_tcom}, which constrains packet departure with the UL SINR threshold only. Hence, the RA-UL model  in \cite{Gharbieh_tcom} is more optimistic as all intracell interfering devices satisfying the SINR threshold simultaneously deliver their packets to the serving BS.

     \item The SC-UL and RA-UL techniques are compared in terms of transmission  success probability, delay, average queue size, and scalability. The effective operational scenario of each transmission scheme is identified. For instance, RA-UL transmission is effective for lower device intensity with high traffic demand per each device. However, when the devices intensity grows, intracell interference becomes overwhelming and  scheduling is necessary to maintain stability.
 \end{itemize}

\vspace{-2mm}
\subsection{Notation \& Organization }
\vspace{-2mm}
Throughout the paper, we use the math italic font for scalars, e.g., $x$. Vectors are denoted by lowercase  math bold font, e.g., $\mathbf{x}$,
while matrices  are denoted by uppercase  math bold font, e.g.,  $\mathbf{X}$. We use the calligraphic font, e.g.,  $\mathcal{X}$ to represent a random variable (RV) while the math typewriter font, e.g., $\mathtt{x}$ is used to represent its instantiation. Moreover, $\mathbb{E}_\mathcal{X} \{\cdot\}$ and $\mathscr{L}_\mathcal{X}$ denote, respectively, the expectation and  the Laplace Transform (LT) of the PDF of the random variable $\mathcal{X}$. We use $\mathbb{P} \{\cdot\}  $ to denote the probability and the over-bar, e.g, $(\bar{\cdot})=(1-\cdot)$ to denote the probabilistic complement operator. $\mathbbm{1}_{\{ \cdot \}}$ is the indicator function which has value of one if the statement $\{ \cdot \}$ is true and zero otherwise. $\Gamma(\cdot)$ indicates the Gamma function and ${}_2 F_1(.)$ is the Gaussian hypergeometric function. Finally, $(\cdot)_{[m]}$ denotes the value at the $m^{th}$ iteration. 

The rest of the paper is organized as follows. Section \ref{sic:System_Model} presents the system model, the approximations, and methodology of analysis. Section \ref{sic:Performance_Analysis} presents the spatial, temporal, and spatiotemporal analysis of the depicted IoT network for both the SC-UL and RA-UL  schemes. Section \ref{sec:Results} presents and discusses some numerical results. Finally, Section \ref{sec:Conclusions} summaries and concludes the paper.

\section{System Model and Approximations}\label{sic:System_Model}

\subsection{Spatial \& Physical Layer Parameters}
We consider a single-tier network where the BSs are spatially distributed in $\mathbb{R}^2$ according to a homogeneous Poisson point process (PPP) $\boldsymbol{\Psi}$ with intensity $\lambda$. The devices are spatially distributed in $\mathbb{R}^2$ via an independent PPP  $\boldsymbol{\Phi}$ with intensity $\mu$. Without loss of generality, each device is assumed to associate to its nearest BS. Hence, the average number of devices associated to each BS is denoted as $\alpha = \frac{\mu}{\lambda} $.  A power-law path-loss model is considered where the signal power decays at a rate $r^{-\eta}$ with the propagation distance $r$, where $\eta>2$ is the path-loss exponent. In addition to the path-loss attenuation, Rayleigh flat fading  is assumed, in which all the channel power gains ($h$) are exponentially distributed with unity power gain. All channel gains are assumed to be independent of each other, independent of the spatial locations, and are identically distributed (i.i.d). It is also assumed that the channel  power gains are independent across different time slots for all the devices. {Fig.\ref{Network} shows a netwrok realization for $\alpha=16$ within an area of $4$ km$^2$.}

\begin{figure}[t!]
	\begin{center}
		\vspace{-10mm}
		\includegraphics[width=5 in ]{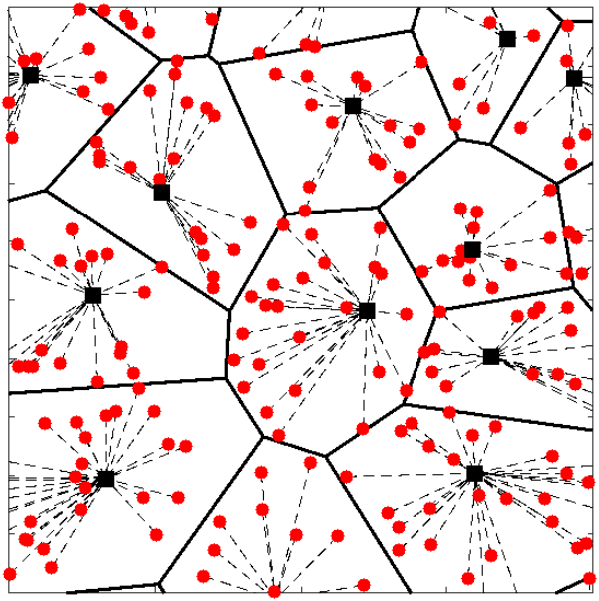}
	\end{center}
	\vspace{-15mm}
	\caption{  Network realization for the system model for Devices-to-BS ratio $\alpha=16$ within an area of $4$ km$^2$. The {\rm BS}s are denoted by black squares and the devices are denoted by the red circles. The Voronoi cells of the {\rm BS}s are denoted by the solid black lines while the black dashed lines denote the associations of the devices to the BSs.}\label{Network}
\end{figure}

All UL transmissions utilize full path-loss inversion power control with threshold $\rho$. That is, each device controls its transmit power such that the average signal power received at its serving BS is equal to a predefined value $\rho$. It is assumed that the BSs are dense enough such that each of the devices can invert its path loss towards the closest BS almost surely, and hence the maximum transmit power of the IoT devices is not a binding constraint for packet transmission. Extension to fractional power control and/or adding a maximum power constraint can be done by following the methodologies in \cite{uplink2_jeff} and \cite{uplink_alamouri}.
\begin{figure}[t!]
	\begin{center}
		\includegraphics[width=3.25 in ]{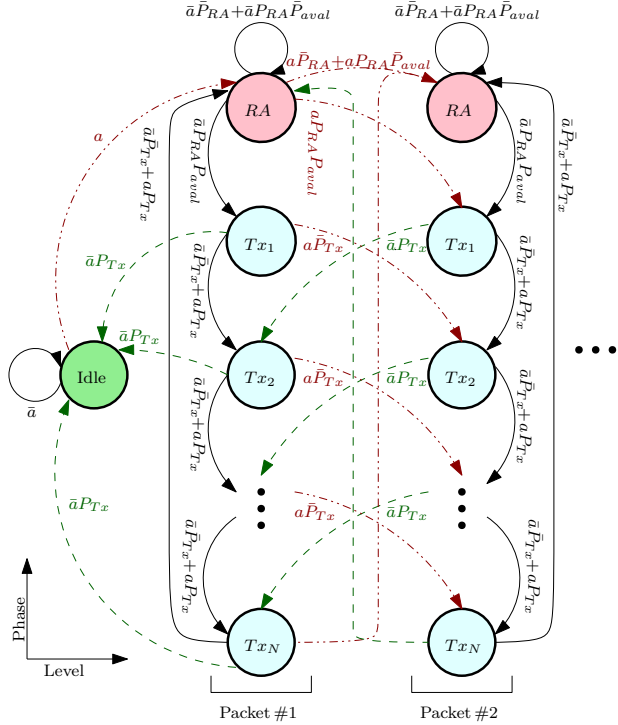}
	\end{center}
	\vspace{-7mm}
	\caption{  Two-dimensional discrete time Markov chain (DTMC) for N-time-slot in SC-UL transmission. $P_{\text{\rm RA}}$ represents the probability of successful {\rm RA}, $P_{\text{\rm Tx}}$ represents the probability of successful SC-UL transmission,  and $P_{\text{\rm aval}}$ represents the probability of available UL frequency. Moreover, the green color indicates empty queue and hence idle state (empty buffer), the red color indicates non-empty queue with {\rm RA} state, and cyan color indicates non-empty queue with scheduled UL transmission state.}\label{N_resources}
\end{figure}
\vspace{-5mm}
\subsection{MAC Layer Parameters}
\vspace{-2mm}
We consider a discrete time slotted network in which time is discretized into slots with equal durations ($T_{\text{\rm s}}$). Each time slot can be used for a single transmission attempt (e.g., RA-SR or EA-Tx for the SC-UL).  Moreover, we assume that a single packet arrival and/or departure can take place per time slot via a First Come First Served (FCFS) discipline. A geometric inter-arrival packet generation, with parameter $a \in [0,1]$ (packet/slot), is assumed at each device. The arrived packets at each device are stored in a buffer (i.e., queue) with infinite length until successful transmission using a UL resource block.\footnote{ Infinite buffers are assumed for generality. In the numerical results, it is shown that small buffer sizes are sufficient as long as the network is stable.} Different from the arrival process, the departure process (i.e., successful packet transmission) cannot be assumed. Instead, the departure process has to be characterized according to the UL transmission protocol and SINR distribution.

\subsection{SC-UL Scheme}
The data is generated at the devices' buffers and the BS is unaware of the devices' buffer status.\footnote{IoT devices have sporadic traffic patterns and can remain idle (i.e., with empty buffers) and go to sleep mode for long periods of time to save battery. Buffer state updates lead to  unnecessary wake-ups that deplete batteries~\cite{Nour}. Furthermore, buffer state updates from billions of devices would impose overwhelming signaling overhead.} Devices with non-empty buffers should send an RA-SR  to the serving BS~\cite{sesia2009lte, Primer_NBIOT}. A device that experiences successful RA-SR is scheduled by the BS for the EA-Tx phase. To model such two-phase UL traffic in 3GPP networks, a tandem (i.e., consecutive) queueing approach with heterogenous departure probabilities is introduced. The tandem queueing model for the SC-UL transmission scheme with joint RA-SR and EA-Tx phases is depicted in Fig.\ref{N_resources}. The first queue (colored in red in Fig.\ref{N_resources}) represents the RA-SR process that occurs prior to resource allocations. The latter queue (colored in cyan in Fig.\ref{N_resources}) represents EA-Tx data transmission after resource allocation.  A detailed description of the two phases for the SC-UL transmission is given in the sequel.


\subsubsection{ \bf RA-SR Phase}
  To request a UL resource block, each device randomly and independently transmits its request on one of the available prime-length orthogonal Zadoff-Chu (ZC) sequences defined by the LTE physical random access channel (PRACH) preamble \cite{sesia2009lte}. Since the number of the ZC sequences is finite, it is possible for more than one device to select the same ZC sequence for RA-SR, which leads to mutual intracell and intercell interference.  Without loss of generality, we assume that all BSs have the same number ($n_{\text{\rm Z}}$) of ZC sequences generated from a single root sequence.\footnote{This implies that the BSs are dense enough such that all the sequences within each BS are generated from cyclic shifts of a single root sequence~\cite{sesia2009lte}.}$^,$\footnote{{ If the neighboring BSs use different root sequences, a thinned PPP with a thinning factor of $1/n_{rs}$ can be used, where $n_{rs}$ is the total number of available root sequences. }}
   We assume a power capture model for successful RA-SR, in which the signal can be decoded if and only if the SINR at the serving BS is greater than the RA-SR  threshold $\theta_{\text{\rm SR}}$. Moreover, each BS can only decode one RA-SR request per ZC code per time slot. That is, across  the intracell interfering devices over the same ZC code within the same BS, only the device that has the highest SINR can succeed, \textcolor{red}{i.e., the BSs have a capturing capability \cite{329345}}. We also assume that the response of each RA-SR attempt is instantaneous and error-free. Upon  RA-SR failure, the ZC code random selection is repeated. It is worth to highlight that the depicted model is consistent with  {\em PRACH configuration index-14} in which there is a PRACH resource in each time slot \cite{sesia2009lte}. {It is assumed that the RA-SR phases are time-synchronized for all the BSs in the network.}

\subsubsection{\bf EA-Tx Phase} Each BS has $q$ resource blocks that are dedicated for UL transmission. A device that {succeeds in} {\rm RA-SR} is granted EA-Tx (i.e., intracell interference free) for the next $N$ subsequent time slots on a randomly selected free resource block by its serving BS.\footnote{The serving BS randomizes the channel allocations for the scheduled devices at each time slot to decorrelate interference across time slots.} If all the $q$ resource blocks are occupied by other UL transmissions, the device has to re-perform the {\rm RA-SR} phase. {If a device flushes its buffer before completing the $N$   {\rm EA-Tx} transmission slots, it immediately goes back to the idle state and releases the allocated channel. Hence, setting $N= 1$ requires a successful RA-SR prior to each {\rm EA-Tx} packet transmission attempt. As such, $N$ is a design parameter for the {\rm SC-UL} scheme.} We assume a power capture model for successful {\rm EA-Tx} transmission, where a data packet departs from the device buffer if and only if the SINR at the serving BS is greater than the UL threshold $\theta_{\text{\rm Tx}}$. We also assume that the feedback of each  transmission attempt is instantaneous and error-free. {It is assumed that the EA-TX phases are time-synchronized in for all the BSs in the network.} 



\subsection{RA-UL Transmission}

 In the RA-UL scheme, each device directly sends the data packets to its closest BS without scheduling. To diversify mutual interference, each device randomly and independently selects a resource block among the $n_{\text{\rm c}}$ available resource blocks for each transmission/retransmission. Since the number of the resource blocks is finite, it is possible for more than one device to utilize the same frequency for RA-UL, which may lead to both intracell and intercell interference. Without loss of generality, we assume that all BSs have the same number of the frequency channels {and the transmission slots are time-synchronized for all the BSs in the network.} 
 We assume a power capture model for successful RA-UL transmission, where a data packet departs from the device buffer if and only if the SINR at the serving BS is greater than the UL threshold $\theta_{\text{\rm UL}}$.  Moreover, in the case of intracell interference, the BS can successfully decode the packet from the device with the highest SINR only, \textcolor{red}{i.e., the BSs have a capturing capability \cite{329345}}. 
 We also assume that the feedback of each  transmission attempt is instantaneous and error-free.  The queueing model for a typical device is shown in  Fig.~\ref{fig_Markov_baseline}, in which the device keeps transmitting as long as it has a non-empty buffer.

 \begin{figure}[t!]
	\centering
	\includegraphics[width=3.75in]{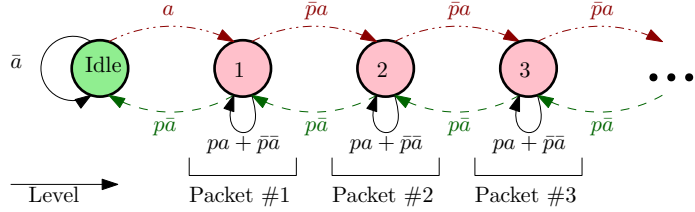}
	\vspace{-7mm}
	\caption{ Queue aware schematic diagram for the direct transmission in RA-UL. $p$ represents the probability of successful transmission, the green color indicates empty queue and hence idle state (not transmitting), and the red color indicates non-empty queue with transmission state. }
	\label{fig_Markov_baseline}
\end{figure}

\subsection{Methodology  of Analysis}

The DTMC in Fig.\ref{N_resources} and Fig.\ref{fig_Markov_baseline} model the temporal evolution of the number of packets in the system as well as the service phase (i.e., RA-SR and {\rm EA-Tx} transmission slots) at each device. Such queueing systems are categorized within the Quasi-Birth-Death (QBD) processes because the population of the system (i.e., the packets in the buffer) can only be incremented or decremented by one in each time slot~\cite{alfa_DTMC}. If the arrival and departure processes are known, the steady-state solution of such QBDs can be characterized. However, the departure process depends on the SINR distribution, which should be derived using stochastic geometry analysis. Meanwhile, the SINR distributions depend on the interfering devices states (e.g., idle or active), which require the steady-state solution of the queueing models. Hence, the stochastic geometry analysis for the SINR distribution and the queueing theory analysis for the devices' steady-state probabilities are interdependent, which necessitates an iterative solution. In this paper, we adopt the following methodology to characterize the performance of the depicted system model with such spatiotemporal interdependence. First, the stochastic geometry analysis is conducted in Section~\ref{sic:Performance_Analysis1}, where expressions for the RA-SR, {\rm EA-Tx}, and the RA-UL success probabilities are obtained as functions queueing theory parameters. Then, the queueing theory analysis is presented in Section~\ref{sic:Performance_Analysis2}, where the steady-state solutions of the queueing models are obtained as functions of the stochastic geometry parameters. The iterative algorithm that simultaneously solves the stochastic geometry and queueing theory set of equations is then presented in Section~\ref{sic:Performance_Analysis3}.

\section{ Performance Analysis} \label{sic:Performance_Analysis}
\vspace{-2mm}
While some IoT devices might be free to move, it is not expected to have tangible variation in terms of the network geometry over consecutive time slots. This is because the locations of the BSs are fixed and that the time slots are too short (e.g., in the scale of milliseconds.) for a tangible geographical displacement. Hence, it is reasonable to consider an arbitrary but fixed network realization that does not change over time. In such static network setup, different devices generally have location dependent performance according to the number and relative locations of proximate interferers~\cite{Meta_Haenggi, Chisci}. However, the  location dependent performance discrepancies are negligible in the depicted system model due to the employed full path-loss inversion power control along with the randomized channel selection per transmission~\cite{Meta_Haenggi_Control, Meta_Elsawy, Gharbieh_tcom}. Consequently, the success probabilities of the typical device (i.e., after spatial averaging) is representative to the performance of all devices, which leads to the following {approximations}:

\vspace{-4mm}
{\begin{approximation} \label{app1}
The departure rates (i.e., transmission success probabilities) of all queues (i.e., devices)\footnote{Hereafter, the terms devices and queues will be used interchangeably. } in the network are assumed to be memoryless and are {characterized} by the departure rate of the typical queue.
\end{approximation}}

\vspace{-4mm}
\begin{remark} \label{rem1}
The full path loss inversion makes the received UL signal power at the serving BSs independent from the service distance (i.e., the distance between the device and the serving BS). Hence, the different realizations of the service distance across the devices do not affect the SINR. Furthermore, the random channel selection randomizes the set of interfering devices over different time slots, which decorrelate the interference across time. Hence, all devices in the network tend to have memoryless departure rates and perform as a typical device as shown in~\cite{Meta_Haenggi_Control, Meta_Elsawy, Gharbieh_tcom}. The accuracy of such approximation is validated in Section~\ref{sec:Results}.
\end{remark}

 Utilizing {Approximation}~\ref{app1}, the performance is assessed for a test BS located at an arbitrary origin in $\mathbb{R}^2$, which becomes a typical BS under spatial averaging. Before delving into the analysis, we state the following two important {approximations} that will be utilized in this paper.

\vspace{-4mm}
 {\begin{approximation} \label{app2}
The spatial correlations between proximate devices, in terms of transmission power and buffer states, can be ignored.
\end{approximation}}

\vspace{-4mm}
\begin{remark} \label{rem2}
It is well known that the sizes of adjacent Voronoi cells are correlated. Such correlation affects the number of devices, as well as, the service distance realizations in adjacent Voronoi cells. Consequently, the transmission powers and devices states (e.g., active or idle) at adjacent cells are correlated. Accounting for such spatial correlation would impede the model tractability. Hence, we follow the common approach in the literature and ignore such spatial correlations when characterizing the aggregate interference~\cite{Meta_Elsawy, Meta_Haenggi_Control, elsawy2014stochastic, marco_uplink, uplink2_jeff, Gharbieh_tcom, 6516885}. However, all spatial correlations are intrinsically accounted for in the Monte Carlo simulations that are used to validate our model in Section~\ref{sec:Results}.
\end{remark}

\vspace{-4mm}
{\begin{approximation} \label{app3}
For {\rm EA-Tx}, the point processes of scheduled inter-cell interfering devices seen at the test BS is {modeled by} a non-homogenous PPP.
\end{approximation}}

\vspace{-4mm}
 \begin{remark} \label{rem3}
Despite that a PPP is used to model the complete set of devices, the subset of UL scheduled devices over a given channel is not a PPP. The constraint of scheduling one device per Voronoi cell per channel leads to a Voronoi-perturbed point process for the set of mutually interfering devices. Approximation~\ref{app3} is commonly used in the literature to maintain tractability~\cite{Meta_Elsawy, elsawy2014stochastic, marco_uplink, uplink2_jeff, Gharbieh_tcom, 6516885}.
\end{remark}

It is worth mentioning that Approximations 1-3 are mandatory for tractability, regularly used in the literature, and are all validated in Section~\ref{sec:Results} via independent Monte-Carlo simulations. Based on these approximations, the spatial and temporal analysis are presented in, respectively, Section~\ref{sic:Performance_Analysis1} and Section~\ref{sic:Performance_Analysis2}. The spatiotemporal model that combines both stochastic geometry and queueing theory analysis is then presented in Section~\ref{sic:Performance_Analysis3}.


\vspace{-5mm}
\subsection{Stochastic Geometry Analysis} \label{sic:Performance_Analysis1}

This section presents the stochastic geometry analysis for the departure rate probabilities for the SC-UL and RA-UL schemes. As mentioned before, the analysis is focused on a test BS located at the origin and a randomly selected device that it serves. For notational convince, let $\bold{h}_\circ$ be the set of channel gains between the test BS and all devices that it serves and let ${h}_\circ \in \bold{h}_\circ$ be the channel gain between the test BS and the selected device for the analysis. For organized presentation, the analysis for each of the SC-UL and RA-UL schemes is provided in a separate subsection.

\subsubsection{SC-UL Scheme}

 Referring to Fig.~\ref{N_resources}, let $x_\circ$ be the probability of being in the idle state (i.e., empty buffer) and $\boldsymbol{\varphi}=[\varphi_{\text{\rm RA}},  \varphi_{\text{\rm Tx}_1}, \varphi_{\text{\rm Tx}_2}, \cdots, \varphi_{\text{\rm Tx}_N} ]$ be the sub-stochastic vector containing the probabilities of being in the RA-SR phase and each of the $N$ phases dedicated for {\rm EA-Tx} for non-empty buffer. For a given $x_\circ$ and $\boldsymbol{\varphi}$, this section characterizes the probabilities $P_{\text{\rm RA}}$, $P_{\text{\rm Tx}}$, and $P_{\text{\rm aval}}$ using stochastic geometry.

First we characterize the {\rm RA-SR} success probability $P_{\text{\rm RA}}$. To evaluate the interference experienced by the test device, we find the Laplace transforms (LT) of the aggregate intracell and intercell interferences. The probability of successful {\rm RA-SR} for the test device is given by
\begin{align} \label{SINR_RA}
P_{\text{\rm RA}}&=\mathbb{P}\left\{  \frac{\rho h_\circ}{\sigma^2 + \mathcal{I}^{(\text{\rm RA})}_{\text{\rm Inter}}+\mathcal{I}^{(\text{\rm RA})}_{\text{\rm Intra}}}>\theta_{\text{\rm SR}},   h_\circ> h_i; \; \forall h_i \in \bold{h}_o \setminus h_\circ   \right\}, \notag \\
&=\mathbb{P}\left\{ \frac{\rho h_\circ}{\sigma^2 + \mathcal{I}^{(\text{\rm RA})}_{\text{\rm Inter}}+\mathcal{I}^{(\text{\rm RA})}_{\text{\rm Intra}}}>\theta_{\text{\rm SR}} \;\mid  h_\circ> h_i \;  \forall h_i \in \bold{h}_o \setminus h_\circ\right\}\mathbb{P}\left\{h_\circ> h_i \;  \forall h_i \in \bold{h}_o \setminus h_\circ \right\}
\end{align}

\noindent which follows from the adopted SINR capture model along with the fact that the BS can only decode the RA-SR with the highest SINR among the intracell interfering devices. ${\mathcal{I}^{(\text{\rm RA})}_{\text{\rm Inter}}}$ denotes intercell interference for RA-SR while  ${\mathcal{I}^{(\text{\rm RA})}_{\text{\rm Intra}}}$ denotes the intracell interference  for RA-SR, and $\sigma^2$ is the noise power. The {\rm RA-SR}  success probability in \eqref{SINR_RA} is characterized the following lemma.

\begin{lemma}\label{lemma_RA_success}
The {\rm RA-SR} success probability  in a PPP single-tier network where each device employs full path-loss inversion power control is given by {
\begin{align}\label{eq:RA_success}
P_{\text{\rm RA}}&= \mathbb{E}_{\mathcal{N}} \left\{\frac{\sum\limits_{k=1 }^{\mathcal{N}+1} \binom{\mathcal{N}+1}{k}(-1)^{k+1} \exp \left\{\frac{-k \; \theta_{\text{\rm RA}\; }\sigma^2}{\rho}\right\}\mathscr{L}_{\mathcal{I}^{(\text{\rm RA})}_{\text{\rm Inter}}}\left(\frac{k \; \theta_{ \text{\rm RA}}}{\rho} \right) \mathscr{L}_{\mathcal{I}^{(\text{\rm RA})}_{\text{\rm Intra}}\mid_{\mathcal{N}\!=\mathtt{n}}} \left(\frac{k \; \theta_{ \text{\rm RA}}}{\rho} \right)}{\mathcal{N}+1}\right\},
\end{align}
with
\begin{align}
\mathscr{L}_{\mathcal{I}^{(\text{\rm RA})}_{\text{\rm Inter}}}\left(\frac{k \; \theta_{ \text{\rm RA}}}{\rho}\right) &\approx \exp\left(-2 \; k \; \theta_{ \text{\rm RA}}\; \frac{ \varphi_{\text{\rm RA}}  \;\tilde{\mu}}{\lambda} \; \frac{{}_2F_1\left(1,1-\frac{2}{\eta},2-\frac{2}{\eta},-k\;\theta_{\text{\rm SR}}\right)}{\eta-2} \right), \label{Eq:laplase_Inter}\\
\mathscr{L}_{\mathcal{I}^{(\text{\rm RA})}_{\text{\rm Intra}}\mid_{\mathcal{N}\!=\mathtt{n}}}\left(\frac{k \; \theta_{ \text{\rm RA}}}{\rho}\right)&=\frac{\mathtt{n}+1}{1+k \; \theta_{\text{\rm SR}}} \left(\frac{1}{\mathtt{n}} -\frac{\Gamma(\mathtt{n}) \;\Gamma(2+k \; \theta_{\text{\rm SR}})}{\Gamma(2+\mathtt{n}+k \; \theta_{\text{\rm SR}})} \right),
\end{align}}
where $\tilde{\mu}=\frac{\mu}{n_{\text{\rm Z}}}$, ${}_2 F_1(.)$ is the Gaussian hypergeometric function, and \eqref{Eq:laplase_Inter} is not exact due to Approximation~2 { mentioned earlier in this section}. The expectation $\mathbb{E}_{\mathcal{N}}\{\cdot\}$ is over the Probability Density Function (PDF) of the number of neighbors  $\mathcal{N}$ which is found in \cite{6576413} as:
\begin{align}\label{pdf_users}
\mathbb{P}\{\mathcal{N} = \mathtt{n}\} \approx \frac{\Gamma(\mathtt{n}+c)}{\Gamma(\mathtt{n}+1)\Gamma(c)} \frac{(\varphi_{\text{\rm RA}}  \;\tilde{\mu})^{\mathtt{n}} (\lambda c)^c}{(\varphi_{\text{\rm RA}}  \;\tilde{\mu}+\lambda c)^{\mathtt{n}+c}},
\end{align}
where  $\varphi_{\text{\rm RA}} $ is the probability of being in the {\rm RA-SR} state and $c=3.575$ is a constant related to the approximate PDF of the PPP Voronoi cell area in $\mathbb{R}^2$.  

\begin{proof}
See Appendix \ref{sec:AppA}.
\end{proof}
\end{lemma}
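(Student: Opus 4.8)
The plan is to condition on the random number $\mathcal{N}$ of \emph{other} active devices that have selected the same ZC sequence as the test device at the test BS, to obtain the conditional success probability in closed form, and then to average over $\mathcal{N}$ using the Voronoi-cell PMF \eqref{pdf_users}. Conditioned on $\mathcal{N}=\mathtt{n}$ there are $\mathtt{n}+1$ co-sequence devices attached to the test BS, and by the full path-loss inversion each of them delivers useful power $\rho h_j$ at the test BS, with $h_0,h_1,\dots,h_{\mathtt{n}}$ i.i.d.\ unit-mean exponential. Since the interference seen by device $j$ over that sequence is $\sigma^2+\mathcal{I}^{(\text{\rm RA})}_{\text{\rm Inter}}+\rho\sum_{i\neq j}h_i$, the SINR of device $j$ is a strictly increasing function of $h_j$ with the other gains held fixed; hence the captured device (largest SINR) is also the largest-gain device, and it is decoded iff its SINR exceeds $\theta_{\text{\rm SR}}$. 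The test device is thus decoded iff it is simultaneously the largest-gain device and $\max_j\mathrm{SINR}_j>\theta_{\text{\rm SR}}$, and by exchangeability of the $\mathtt{n}+1$ gains this event has probability $\tfrac{1}{\mathtt{n}+1}\,\mathbb{P}\bigl\{\bigcup_{j=1}^{\mathtt{n}+1}\{\mathrm{SINR}_j>\theta_{\text{\rm SR}}\}\bigr\}$, which explains the $\tfrac{1}{\mathcal{N}+1}$ prefactor.

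Next I would expand the union by inclusion--exclusion and, using exchangeability once more to replace the $\binom{\mathtt{n}+1}{k}$ index sets of size $k$ by a single representative, recover the $\sum_{k=1}^{\mathtt{n}+1}\binom{\mathtt{n}+1}{k}(-1)^{k+1}(\cdot)$ structure of \eqref{eq:RA_success}, with the $k$-th term equal to $\mathbb{P}\{\mathrm{SINR}_1>\theta_{\text{\rm SR}},\dots,\mathrm{SINR}_k>\theta_{\text{\rm SR}}\mid\mathcal{N}=\mathtt{n}\}$. Rewriting each constraint as $h_j>\tfrac{\theta_{\text{\rm SR}}}{\rho}\bigl(\sigma^2+\mathcal{I}^{(\text{\rm RA})}_{\text{\rm Inter}}\bigr)+\theta_{\text{\rm SR}}\sum_{i\neq j}h_i$ and using the independence of the Rayleigh power gains, this $k$-fold probability separates into the noise factor $\exp(-k\theta_{\text{\rm SR}}\sigma^2/\rho)$, the intercell factor $\mathscr{L}_{\mathcal{I}^{(\text{\rm RA})}_{\text{\rm Inter}}}(k\theta_{\text{\rm SR}}/\rho)$, and the intracell factor $\mathscr{L}_{\mathcal{I}^{(\text{\rm RA})}_{\text{\rm Intra}}\mid\mathcal{N}=\mathtt{n}}(k\theta_{\text{\rm SR}}/\rho)$ that collects the contribution of the $\mathtt{n}+1$ intracell exponential gains.

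For the intercell factor I would invoke Approximations~2--3: the active co-sequence interferers form a PPP of intensity $\varphi_{\text{\rm RA}}\tilde{\mu}$, each transmitting at power $\rho r^{\eta}$ set by its own serving distance $r$; applying the PPP probability generating functional, averaging the fading to get the kernel $\bigl(1+s\rho r^{\eta}u^{-\eta}\bigr)^{-1}$ with $s=k\theta_{\text{\rm SR}}/\rho$, and then averaging over the (decoupled) link distance $r$ with the nearest-BS density, the radial integral collapses after a change of variable to \eqref{Eq:laplase_Inter}, with the $\varphi_{\text{\rm RA}}\tilde{\mu}/\lambda$ scaling and the $_2F_1$ factor emerging naturally --- this is the standard uplink development of \cite{uplink2_jeff, Gharbieh_tcom}. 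For the intracell factor I would evaluate the remaining expectation over the $\mathtt{n}+1$ i.i.d.\ exponential gains under the capture ordering; this reduces to elementary integrals of the form $\int_0^1 u^{\mathtt{n}-1}\bigl(1-(1-u)^{1+k\theta_{\text{\rm SR}}}\bigr)\,du$, i.e.\ to Beta/Gamma functions, yielding $\tfrac{\mathtt{n}+1}{1+k\theta_{\text{\rm SR}}}\bigl(\tfrac{1}{\mathtt{n}}-\tfrac{\Gamma(\mathtt{n})\,\Gamma(2+k\theta_{\text{\rm SR}})}{\Gamma(2+\mathtt{n}+k\theta_{\text{\rm SR}})}\bigr)$. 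Deconditioning over $\mathcal{N}$ with \eqref{pdf_users} then gives \eqref{eq:RA_success}.

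The delicate parts are two. First, the separation of the $k$-fold joint SINR-exceedance probability into the three Laplace-type pieces: the events $\{\mathrm{SINR}_j>\theta_{\text{\rm SR}}\}$ share the same intercell interference realization and the same pool of intracell gains, so one must carefully split the ``desired'' gains $h_1,\dots,h_k$ from the ``interfering'' gains, and, consistently with Approximation~1, argue that the resulting expression is the right quantity to feed the fixed-point iteration of Section~\ref{sic:Performance_Analysis3}. Second, the uplink intercell Laplace transform, where each interferer's transmit power is coupled both to its distance to the test BS (through the path-loss inversion) and to the size of its own Voronoi cell; Approximations~2 and~3 are exactly what decouple these dependencies and make the generating-functional integral tractable, and extracting the $_2F_1$ factor and the $\eta-2$ denominator cleanly requires attention to the integration limits and to the serving-distance density.
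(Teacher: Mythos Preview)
Your plan is essentially the paper's proof. The paper also conditions on $\mathcal{N}=\mathtt{n}$, extracts the $1/(\mathtt{n}+1)$ factor by the same symmetry argument, and obtains the alternating $\sum_{k}\binom{\mathtt{n}+1}{k}(-1)^{k+1}(\cdot)$ by binomially expanding the CCDF $1-(1-e^{-t})^{\mathtt{n}+1}$ of the maximum gain, which is inclusion--exclusion in another guise. The intercell Laplace transform is derived via the PGFL exactly as you sketch (citing the same uplink machinery), and the intracell factor is obtained by conditioning on the maximum gain $y$, writing each interferer as an exponential truncated to $[0,y]$, taking its Laplace transform, and deconditioning over the density of $y$; this produces precisely your Beta/Gamma integral $\int_0^1 u^{\mathtt{n}-1}\bigl(1-(1-u)^{1+k\theta_{\text{SR}}}\bigr)\,du$.

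One point to sharpen: the paper does \emph{not} arrive at the $k$-th term by evaluating the genuine $k$-fold joint probability $\mathbb{P}\{\mathrm{SINR}_1>\theta_{\text{SR}},\dots,\mathrm{SINR}_k>\theta_{\text{SR}}\}$. Those $k$ events are coupled (each $h_j$ is both a desired gain and an interferer for the others), and they do not factor into $e^{-k\theta_{\text{SR}}\sigma^2/\rho}\,\mathscr{L}_{\text{Inter}}(k\theta_{\text{SR}}/\rho)\,\mathscr{L}_{\text{Intra}}(k\theta_{\text{SR}}/\rho)$ as you write. Rather, the paper applies the max-CCDF with the interference threshold $(\theta_{\text{SR}}/\rho)(\sigma^2+\mathcal{I}_{\text{Inter}}+\mathcal{I}_{\text{Intra}})$ treated as given, expands, and only then averages over $\mathcal{I}_{\text{Intra}}$ separately --- this implicit decoupling is where the approximation sits. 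Your intracell calculation lands on the paper's formula because you are, in effect, performing the same step; the justification for it, however, is the paper's CCDF-then-average argument rather than a clean factorization of the joint SINR event. You rightly flagged this as the delicate part.
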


{After a successful {RA-SR}, the device proceeds to the {\rm EA-Tx} phase for data transmission if there is an available (i.e., not used by other devices) resource block at the serving BS. Let $\varphi_{\text{\rm Tx}_i}$ be the probability that a device is using one of the available $q$ resource blocks for data transmission and is in the $1<i^{th}<N$ transmission time slot.  Let $\mathcal{N}_i$ be a random variable that represents the number of transmitting devices  at the $i^{th}$ transmission time slot. As a result, the PDF of the number of devices within the same cell that have reserved resource blocks for the next time slot is given by
\begin{align}\label{pdf_users_Tx}
\mathbb{P}\left\{\sum\limits_{i=1 }^{N-1} \mathcal{N}_i= \mathtt{n}\right\} \approx \frac{\Gamma(\mathtt{n}+c)}{\Gamma(\mathtt{n}+1)\Gamma(c)} \frac{(\mu \sum_{k=1}^{N-1}\varphi_{\text{\rm Tx}_k} )^{\mathtt{n}} (\lambda c)^c}{(\mu \sum_{k=1}^{N-1}\varphi_{\text{\rm Tx}_k} +\lambda c)^{\mathtt{n}+c}},
\end{align}
where \eqref{pdf_users_Tx} follows from the superposition property of the PPP. And hence, the intensity of the device that have reserved resource blocks for the next time slot equals to $(\mu \sum_{k=1}^{N-1}\varphi_{\text{\rm Tx}_k}) $. Consequently, the probability that a device proceeds from the RA-SR phase to {\rm EA-Tx} phase, i.e., the probability of finding an available resource block at the next time slot for {\rm EA-Tx} transmission can be expressed as follows:
\begin{align}\label{eq:Paval}
P_{\text{\rm aval}}&= \mathbb{P}\left\{\sum\limits_{i=1 }^{N-1} \mathcal{N}_i < q\right\} =  \sum_{n=1}^{q-1} \frac{\Gamma({n}+c)}{\Gamma({n}+1)\Gamma(c)} \frac{(\mu \sum_{k=1}^{N-1}\varphi_{\text{\rm Tx}_k} )^{{n}} (\lambda c)^c}{(\mu \sum_{k=1}^{N-1}\varphi_{\text{\rm Tx}_k} +\lambda c)^{{n}+c}},
\end{align}
where \eqref{eq:Paval} follows from the definition of the CDF for a random variable. It is worth to mention that \eqref{pdf_users_Tx} and \eqref{eq:Paval} exclude the devices in the $N^{th}$ transmission slot since those devices have to resend RA-SR and will not be allocated {\rm EA-Tx} resources in the next time slot.}

Once the device enters the {\rm EA-Tx} phase, it operates over an exclusive channel for $N$ subsequent time slots. Exploiting Approximation 1 and the fact that the channel allocation randomly changes from one time slot to another, the packet transmission success probability is independent from one time slot to another. Hence, the probability of {\rm EA-Tx} packet transmission success in an arbitrarily selected time slot is
\begin{align} \label{SINR_EU}
P_{\text{\rm Tx}}&=\mathbb{P}\left\{  \frac{\rho h_\circ}{\sigma^2 + \mathcal{I}^{(\text{\rm EA})}_{\text{\rm Inter}}}>\theta_{\text{\rm Tx}}   \right\}
\end{align}
Comparing \eqref{SINR_EU} and \eqref{SINR_RA}, it is clear the {\rm EA-Tx} does not experience intracell interference, and hence, the condition of highest SINR within the cell does not exits. Furthermore, it is important to note that   $\mathcal{I}^{(\text{\rm RA})}_{\text{\rm Inter}}$  statistically dominates $\mathcal{I}^{(\text{\rm EA})}_{\text{\rm Inter}}$. This is because {\rm EA-Tx} allows at most one intercell interferer per BS per channel as opposed to RA-SR which permits multiple intercell interferers per BS per channel. The packet transmission success probability in the {\rm EA-Tx} phase is characterized in the following lemma.

\begin{lemma}\label{lemma_Tx_success}
The probability of successful data transmission in a PPP single-tier network where each device employs full path-loss inversion power control, can be expressed as
\vspace{-2mm}
\begin{align}\label{eq:Out2}
  {P_{\rm Tx}} &\; {\approx} \; \exp\left\{- \frac{\sigma^2 \theta_{\text{\rm Tx}}}{\rho} -  \frac{2  \theta_{\text{\rm Tx}}}{(\eta-2) } \;{}_2F_1\left(1,1-\frac{2}{\eta},2-\frac{2}{\eta},-\theta_{\text{\rm Tx}}\right) \right\} \notag \\
& \overset{(\eta=4)}{\approx} \exp\left\{- \frac{\sigma^2 \theta_{\text{\rm Tx}}}{\rho} - \sqrt{\theta_{\text{\rm Tx}}}\arctan\left({\sqrt{\theta_{\text{\rm Tx}}}}\right)   \right\}.
\end{align}
\begin{proof}
Similar to \cite[Theorem 1]{uplink_alamouri}, {where \eqref{eq:Out2} is not exact due to Approximations~2~and~3 mentioned earlier in this section}.
\end{proof}
\end{lemma}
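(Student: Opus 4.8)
The plan is to use the standard Rayleigh-fading coverage recipe, specialized to the uplink with full path-loss inversion, and then reduce the interference Laplace transform to the Gauss hypergeometric form; the computation parallels \cite[Theorem~1]{uplink_alamouri}. \emph{Step 1 (remove the desired-signal fading).} Since $h_\circ\sim\exp(1)$, conditioning on the noise and interference and integrating $h_\circ$ out gives
\begin{equation*}
P_{\text{Tx}} = \mathbb{P}\!\left\{ h_\circ > \frac{\theta_{\text{Tx}}\left(\sigma^2 + \mathcal{I}^{(\text{EA})}_{\text{Inter}}\right)}{\rho} \right\} = \exp\!\left\{ -\frac{\sigma^2\theta_{\text{Tx}}}{\rho} \right\}\,\mathscr{L}_{\mathcal{I}^{(\text{EA})}_{\text{Inter}}}\!\left( \frac{\theta_{\text{Tx}}}{\rho} \right),
\end{equation*}
so the whole problem reduces to evaluating the Laplace transform of the inter-cell interference at $s=\theta_{\text{Tx}}/\rho$. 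Note there is no intracell term and no ``highest SINR in the cell'' event, since the EA-Tx phase enforces one device per BS per channel; this is exactly what distinguishes \eqref{SINR_EU} from \eqref{SINR_RA}.

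\emph{Step 2 (interference model and PGFL).} By Approximation~3 the scheduled inter-cell interferers over the test channel are modeled as a PPP; since EA-Tx admits at most one interferer per cell, I take its intensity to be the BS intensity $\lambda$. An interferer at distance $v$ from the test BS, served at distance $R$ by its own (nearest) BS, transmits with power $\rho R^\eta$ under full inversion and hence contributes $\rho R^\eta v^{-\eta} h$ at the test BS, with $h\sim\exp(1)$ and $v>R$. By Approximation~2 the mark $R$ is decoupled from the test cell and assigned the nearest-BS law $f_R(r)=2\pi\lambda r\,e^{-\pi\lambda r^2}$. Applying the PPP probability generating functional and averaging over $h$,
\begin{equation*}
\mathscr{L}_{\mathcal{I}^{(\text{EA})}_{\text{Inter}}}\!\left( \frac{\theta_{\text{Tx}}}{\rho} \right) = \exp\!\left\{ -2\pi\lambda\,\mathbb{E}_R\!\left[ \int_R^\infty\!\left( 1 - \frac{1}{1 + \theta_{\text{Tx}}R^\eta v^{-\eta}} \right) v\,dv \right] \right\}.
\end{equation*}

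\emph{Step 3 (evaluate the integral).} The substitutions $w=v/R$ and then $y=\theta_{\text{Tx}}w^{-\eta}$ turn the inner integral into $R^2\,\frac{\theta_{\text{Tx}}^{2/\eta}}{\eta}\int_0^{\theta_{\text{Tx}}}\frac{y^{-2/\eta}}{1+y}\,dy$, which equals $R^2\,\frac{\theta_{\text{Tx}}}{\eta-2}\,{}_2F_1\!\left(1,1-\tfrac2\eta,2-\tfrac2\eta,-\theta_{\text{Tx}}\right)$ by the representation $\int_0^x \frac{y^{a-1}}{1+y}\,dy = \frac{x^a}{a}\,{}_2F_1(1,a;a+1;-x)$ with $a=1-2/\eta$. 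Taking the remaining expectation with $\mathbb{E}_R[R^2]=1/(\pi\lambda)$ cancels $\lambda$ — the known density-invariance of full-inversion uplink interference, which is why the final formula carries neither $\lambda$ nor $\mu$ — and produces the first line of \eqref{eq:Out2}. For $\eta=4$ one has $\frac{2\theta_{\text{Tx}}}{\eta-2}=\theta_{\text{Tx}}$ and ${}_2F_1(1,\tfrac12;\tfrac32;-\theta_{\text{Tx}}) = \arctan(\sqrt{\theta_{\text{Tx}}})/\sqrt{\theta_{\text{Tx}}}$, which collapses to $\sqrt{\theta_{\text{Tx}}}\arctan(\sqrt{\theta_{\text{Tx}}})$, giving the second line.

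\emph{Main obstacle.} The genuine difficulty, and the reason the statement is an approximation rather than an identity, lies entirely in Step~2: the true set of scheduled interferers is a Voronoi-perturbed point process (exactly one device per cell, each subject to $v>R$, with cell sizes correlated across neighbours), not a PPP, and the serving and interfering distances are not independent. Approximations~2 and~3 are precisely what legitimize the PGFL step and the $f_R$ substitution; I do not attempt to bound the error they introduce analytically and instead rely on the Monte-Carlo validation in Section~\ref{sec:Results}. A minor technical point is justifying the interchange of the $R$-expectation with the $v$-integral and convergence as $v\to\infty$, both of which hold because $\eta>2$.
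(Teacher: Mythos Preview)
Your proposal is correct and follows precisely the approach the paper intends: the paper's own proof is simply a pointer to \cite[Theorem~1]{uplink_alamouri} together with an invocation of Approximations~2 and~3, and your Steps~1--3 are exactly the standard Rayleigh/PGFL/hypergeometric computation that theorem encodes, specialized to one scheduled interferer per BS (intensity $\lambda$) under full path-loss inversion. Your explicit identification of where Approximations~2 and~3 enter (decorrelating the $R$-marks and replacing the Voronoi-perturbed interferer process by a PPP) matches the paper's caveat verbatim.
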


\subsubsection{RA-UL Scheme} Referring to Fig.~\ref{fig_Markov_baseline}, let $x_\circ$ be the probability of empty buffer. This section characterizes the successful packet transmission probability  $p$ for a given $x_\circ$ using stochastic geometry.

In the RA-UL scheme, the devices directly transmit UL data packets over a randomly selected channel. Hence, transmissions in the RA-UL scheme are subject to intracell and intercell interference. Considering the fact that only one of the intracell interfering devices can succeed  at a given time slot, the transmission success probability $p$ for the RA-UL scheme is characterized via the following lemma.

\begin{lemma} \label{lem:out_NB_IoT}
	The transmission success probability in the depicted PPP network with RA-UL scheme, where the message with the highest SINR is decodable if its SINR is greater than the detection threshold $\theta_{\text{\rm UL}}$, is given by

{
\begin{align}\label{eq:TX_NB_success}
p&= \mathbb{E}_{\mathcal{N}} \left\{\frac{\sum\limits_{k=1 }^{\mathcal{N}+1} \binom{\mathcal{N}+1}{k}(-1)^{k+1} \exp \left\{\frac{-k \; \theta_{\text{\rm UL}} \sigma^2}{\rho}\right\}\mathscr{L}_{\mathcal{I}_{\rm Inter}}\left(\frac{k \; \theta_{\text{\rm UL}}}{\rho} \right) \mathscr{L}_{\mathcal{I}_{\rm Intra}\mid_{\mathcal{N}=\mathtt{n}}} \left(\frac{k \; \theta_{\text{\rm UL}}}{\rho} \right)}{\mathcal{N}+1}\right\},
	\end{align}
	with
	\begin{align}
	\mathscr{L}_{\mathcal{I}_{\text{\rm Inter}}}\left(\frac{k \; \theta_{\text{\rm UL}}}{\rho}\right) &\approx \exp\left(-2 \; k \; \theta_{\text{\rm UL}}\; \frac{ \bar{x}_\circ \;\mu^{\prime}}{\lambda} \; \frac{{}_2F_1\left(1,1-\frac{2}{\eta},2-\frac{2}{\eta},-k\;\theta_{\text{\rm UL}}\right)}{\eta-2} \right), \label{Eq:laplase_Inter11}\\
	\mathscr{L}_{\mathcal{I}_{\rm Intra}\mid_{\mathcal{N}=\mathtt{n}}}\left(\frac{k \; \theta_{\text{\rm UL}}}{\rho}\right)&=\frac{\mathtt{n}+1}{1+k \; \theta_{\text{\rm UL}}} \left(\frac{1}{\mathtt{n}} -\frac{\Gamma(\mathtt{n}) \;\Gamma(2+k \; \theta_{\text{\rm UL}})}{\Gamma(2+\mathtt{n}+k \; \theta_{\text{\rm UL}})} \right),
	\end{align}}
	where { $\mu^{\prime}=\mu/n_{\text{\rm c}}$ and \eqref{Eq:laplase_Inter11} is not exact due to Approximation~2 mention earlier in this section. The expectation $\mathbb{E}_{\mathcal{N}}\{\cdot\}$ is over the PDF of the number of neighbors  $\mathcal{N}$ as: 
		\begin{align}\label{pdf_users1}
		\mathbb{P}\{\mathcal{N} = \mathtt{n}\} \approx \frac{\Gamma(\mathtt{n}+c)}{\Gamma(\mathtt{n}+1)\Gamma(c)} \frac{(\bar{x}_\circ  \;\mu^{\prime})^{\mathtt{n}} (\lambda c)^c}{(\bar{x}_\circ   \;\mu^{\prime}+\lambda c)^{\mathtt{n}+c}},
		\end{align}
 where $x_\circ$ is the probability of being in the idle state and $c=3.575$ is a constant related to the approximate PDF of the PPP Voronoi cell area in $\mathbb{R}^2$ \cite{6576413}.   }

\begin{proof}
	Similar to Lemma~\ref{lemma_RA_success}.
\end{proof}

\end{lemma}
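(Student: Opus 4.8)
The plan is to follow the template of the proof of Lemma~\ref{lemma_RA_success} (Appendix~\ref{sec:AppA}) essentially verbatim, since the RA-UL data channel plays here exactly the role that the RA-SR preamble channel plays there. The only substitutions are that the capture threshold becomes $\theta_{\mathrm{UL}}$, and that a device is a potential co-channel interferer precisely when it has a non-empty buffer (which happens with probability $\bar x_\circ$, cf.\ Fig.~\ref{fig_Markov_baseline}) and independently picks the tagged channel (one of $n_{\mathrm c}$), so the relevant per-channel interferer intensity is $\mu^{\prime}\bar x_\circ=\bar x_\circ\mu/n_{\mathrm c}$ instead of $\varphi_{\mathrm{RA}}\mu/n_{\mathrm Z}$.

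First I would set up the event. Under full path-loss inversion every intracell co-channel device is received at its serving BS with power $\rho h_i$, so $\mathrm{SINR}_i=\rho h_i/(\sigma^2+\mathcal I_{\mathrm{Inter}}+\sum_{j\neq i}\rho h_j)$; writing $D$ for the noise plus the total intracell-plus-intercell received power, $\mathrm{SINR}_i=\rho h_i/(D-\rho h_i)$, which is strictly increasing in $h_i$, hence $\mathrm{SINR}_i>\mathrm{SINR}_j\Leftrightarrow h_i>h_j$. Thus the BS captures the tagged device iff its SINR exceeds $\theta_{\mathrm{UL}}$ and $h_\circ$ is the largest gain among the co-channel intracell devices, which is the analogue of \eqref{SINR_RA} with $\theta_{\mathrm{SR}}$ replaced by $\theta_{\mathrm{UL}}$. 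I would then condition on the number $\mathcal N$ of co-channel intracell interferers: by independent thinning over the $n_{\mathrm c}$ channels together with Approximation~\ref{app1} (memoryless activity of the typical device) and Approximation~\ref{app2} (ignoring spatial correlations), the active co-channel devices are treated as a PPP of intensity $\mu^{\prime}\bar x_\circ$, so conditioned on the tagged device being one of them, $\mathcal N$ has the approximate law \eqref{pdf_users1} borrowed from \cite{6576413}. Given $\mathcal N=\mathtt n$, exchangeability of the $\mathtt n+1$ co-channel devices gives that the tagged device is the captured one with probability $\tfrac{1}{\mathtt n+1}$ times the probability that \emph{some} co-channel device is captured, which is the source of the $1/(\mathcal N+1)$ prefactor in \eqref{eq:TX_NB_success}.

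Next I would expand $\mathbb P\{\bigcup_{i=1}^{\mathtt n+1}\{\mathrm{SINR}_i>\theta_{\mathrm{UL}}\}\mid\mathcal N=\mathtt n\}$ by inclusion--exclusion and, using exchangeability again, reduce it to $\sum_{k=1}^{\mathtt n+1}\binom{\mathtt n+1}{k}(-1)^{k+1}\mathbb P\{\mathrm{SINR}_1>\theta_{\mathrm{UL}},\dots,\mathrm{SINR}_k>\theta_{\mathrm{UL}}\mid\mathcal N=\mathtt n\}$, which is the alternating binomial sum in \eqref{eq:TX_NB_success}. For each $k$, since the channel gains are exponential the joint capture event is a system of linear inequalities in the gains; conditioned on $\mathcal N$ the interference sources decouple, so the joint probability is written as the product of a thermal-noise term $\exp\{-k\theta_{\mathrm{UL}}\sigma^2/\rho\}$, an inter-cell factor $\mathscr L_{\mathcal I_{\mathrm{Inter}}}(k\theta_{\mathrm{UL}}/\rho)$ obtained from the standard uplink probability generating functional computation of the interfering PPP after the path-loss-inversion change of variables (this is where Approximation~\ref{app2} enters and yields \eqref{Eq:laplase_Inter11}), and an intra-cell factor $\mathscr L_{\mathcal I_{\mathrm{Intra}}\mid\mathcal N=\mathtt n}(k\theta_{\mathrm{UL}}/\rho)$ obtained by averaging over the joint law of the $\mathtt n+1$ co-channel exponential gains over the capture region and condensing the resulting Beta-type integrals into Gamma functions. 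Substituting these back, dividing by $\mathcal N+1$, and averaging over \eqref{pdf_users1} then gives \eqref{eq:TX_NB_success}.

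I expect the intra-cell factor to be the main obstacle. Unlike the inter-cell interferers, each co-channel gain is simultaneously the numerator of one device's SINR and part of the denominator of every other co-channel device's SINR, so the $k$-fold joint-capture event couples the $\mathtt n+1$ exponential variables and does not split device-by-device; disentangling this coupling, handling the dependence of the common threshold level on all the gains, and collapsing the nested integrals into the stated closed form — together with, to a lesser degree, carrying out the uplink inter-cell Laplace transform under nearest-BS association with full power control — is where the real work lies, the rest being inherited directly from the proof of Lemma~\ref{lemma_RA_success}.
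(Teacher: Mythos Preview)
Your proposal is correct and mirrors the paper's proof of Lemma~\ref{lemma_RA_success}: your inclusion--exclusion reading of the alternating binomial sum is equivalent to the paper's binomial expansion of the max-gain CCDF $1-(1-e^{-h})^{\mathtt n+1}$, and the substitutions $\theta_{\mathrm{SR}}\to\theta_{\mathrm{UL}}$ and $\varphi_{\mathrm{RA}}\tilde\mu\to\bar x_\circ\mu'$ are exactly what the paper intends. The intracell factor is actually handled more directly than you anticipate---the paper conditions on the realized maximum intracell gain $y$, treats the intracell interferer as a truncated exponential on $[0,y]$, computes its Laplace transform in closed form, and then deconditions over the distribution of $y$---so the $k$-fold coupling you flag never has to be disentangled explicitly.
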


\subsection{Queueing Theory Analysis} \label{sic:Performance_Analysis2}



This section develops the queueing theory analysis to track the temporal packet accumulation/departure at the devices' buffers. As stated in Approximation~\ref{app1} and Remark~\ref{app1}, the steady-state distribution of the test device is representative to all devices in the system and that the queue departures are memoryless. Since only one packet arrival and/or departure can occur in each time slot, the temporal evolution of the number of packets in the test device buffer can be traced  via a QBD queueing model with the following general probability transition matrix
\begin{align}
\mathbf{P}=\begin{bmatrix}
B & \mathbf{C} &  &  &  & \\
\mathbf{E} & \mathbf{A}_1 & \mathbf{A}_0 &  &  &\\
  &\mathbf{A}_2  & \mathbf{A}_1 & \mathbf{A}_0 &   &\\
  &  & \mathbf{A}_2 & \mathbf{A}_1 & \mathbf{A}_0&  &\\
  &  &     &\ddots  &\ddots  &\ddots
\end{bmatrix},
\label{eq:trans_matrix1}
\end{align}
\noindent where $B \in\mathbb{R} ,\mathbf{C} \in \mathbb{R}^{1\times n}, \mathbf{E} \in \mathbb{R}^{n \times1}, \mathbf{A}_0 \in \mathbb{R}^{n \times n}, \mathbf{A}_1 \in \mathbb{R}^{n \times n}$, and $\mathbf{A}_2 \in \mathbb{R}^{n \times n}$ are the sub-stochastic matrices that capture the transitions between the queue levels (i.e., number of packets in the buffer). Particularly, the sub-stochastic matrices  $\mathbf{A}_0$, $\mathbf{A}_1$ and $\mathbf{A}_2$  capture, respectively, one step increasing, unchanged, and one step decreasing number of packets int the buffer. Furthermore, $B$, $\mathbf{C}$, and $\mathbf{E}$ are the boundary transition vectors between the idle state and Level 1 (i.e., empty buffer and non-empty buffer with only one packet). The transmission matrix in \eqref{eq:trans_matrix1} will be populated according to the utilized transmission scheme (i.e., SC-UL or RA-UL) as shown in the sequel.

\subsubsection{SC-UL Scheme}

To analyze the QBD queueing model for the SC-UL, shown in Fig.\ref{N_resources}, we employ the matrix-analytic-method (MAM)~\cite{MAM, alfa_DTMC}. Particularly, the departure process is modeled via a Phase (PH) type distribution that captures all transition phases the queue can experience until a packet departure.\footnote{Interested readers may refer to \cite{Gharbieh_tcom} for full technical details on how to combine MAM and stochastic geometry into a unified framework.} Referring to the SC-UL queueing model shown in Fig.\ref{N_resources} and utilizing the PH distribution for the departure process,  the transmission matrix in \eqref{eq:trans_matrix1} is populated as follows: $B=\bar{a}$, $\mathbf{C} \! = \!\{a,\boldsymbol{0}\}$, $\mathbf{E}  = \bar{a} \mathbf{s}$, $\mathbf{A}_0 ={a} \mathbf{S}$, $\mathbf{A}_1 =a \mathbf{G} +\bar{a} \mathbf{S}$, and $\mathbf{A}_2 =\bar{a} \mathbf{G}$ where $\mathbf{0}$ is a row vector of zeros of length  {$N$}, $\mathbf{s} =\mathbf{e}-\mathbf{S} \times \mathbf{e}$, and $\mathbf{e}$ is used in this paper to represent a column vector of the proper length. The sub-stochastic matrices $S$ and $G$ are of size $(N+1) \times (N+1)$ and are given by
\begin{align}
\mathbf{S}\!=\!\begin{bmatrix}
\bar{P}_{\text{\rm RA}}+P_{\text{\rm RA}}\bar{P}_{\text{aval}}&P_{\text{\rm RA}}P_{\text{\rm aval}}& 0  & 0 &\hdots  &0\\
 0 & 0 &\bar{P}_{\text{\rm Tx}}& 0 &\hdots  &0\\
 0& 0 & 0 &\bar{P}_{\text{\rm Tx}} &\hdots  &0\\
 \vdots & \vdots&  \vdots& \ddots & \ddots  &\vdots \\
 0 &  0&  0& 0 &\hdots  &\bar{P}_{\text{\rm Tx}}\\
\bar{P}_{\text{\rm Tx}}&  0&  0& 0 &\hdots &0
\end{bmatrix}, \quad\mathbf{G}\!=\!\begin{bmatrix}
0 &0& 0  & 0 &\hdots  &0\\
0 & 0 &P_{\text{\rm Tx}}& 0 &\hdots  &0\\
0& 0 & 0 &P_{\text{\rm Tx}} &\hdots  &0\\
\vdots & \vdots&  \vdots& \ddots & \ddots  &\vdots \\
0 &  0&  0& 0 &\hdots  &P_{\text{\rm Tx}}\\
P_{\text{\rm Tx}}&  0&  0& 0 &\hdots &0
\end{bmatrix}.
\label{S_G_matrix}
\end{align}
where  ${P}_{\text{\rm RA}}$, $P_{\text{\rm RA}}$, and $P_{\text{\rm Tx}}$ are derived, respectively in \eqref{eq:RA_success}, \eqref{eq:Paval}, and \eqref{eq:Out2} via stochastic geometry analysis. {It is worth mentioning that the probability transition matrix of the QBD queueing model is irreducible, aperiodic, and positive recurrent (i.e., ergodic DTMC. Hence, the steady-state probabilities exist \cite{alfa_DTMC}.}

By Leoynes theorem \cite{loyens}, the queueing model in \eqref{eq:trans_matrix1}  is stable if the average arrival rate is less than the average service rate. Following \cite{alfa_DTMC,MAM}, let $\mathbf{A}=\mathbf{A}_0+\mathbf{A}_1+\mathbf{A}_2$ and the vector $\boldsymbol{\nu}$ be the unique solution of the system of equations given by $\boldsymbol{\nu}  \mathbf{A}=\boldsymbol{\nu}  $ and $ \boldsymbol{\nu}  \mathbf{e}=1$. Then, the sufficient stability condition for the SC-UL is
\begin{align}\label{stability_condition}
\boldsymbol{\nu} \bar{a}\mathbf{G}\mathbf{e}>\boldsymbol{\nu}  a\mathbf{S}  \mathbf{e}.
\end{align}

Let $\mathbf{x}=[x_\circ, \mathbf{x}_1, \mathbf{x}_2,\ \hdots,\ ]$ be the stationary distribution, where $x_\circ$ represents the probability of being in the idle state and $\mathbf{x}_i = [x_{i,1} x_{i,2} \dots x_{i,N}]$ is the probability vector of being in any service phases at the level $i$ of the queue. The steady-state solution for the SC-UL queueing model is obtained by solving the following system of equations
\begin{align} \label{hhss}
\mathbf{x}  \mathbf{P}=\mathbf{x} \quad  \text{and} \quad\mathbf{x}  \mathbf{e}&=1.
\end{align}
An explicit expression for the steady-state solution vector $\mathbf{x}$ can be obtained as highlighted in the following lemma.

\begin{lemma} \label{lemma_dis}
 Given that the stability condition in \eqref{stability_condition} is satisfied, then solving the system of equations in \eqref{hhss} for the SC-UL scheme gives the following steady-state solution
\begin{align}
x_\circ & = \left( 1+\mathbf{C}  \left(  [\mathbf{I}-  a \mathbf{G} - \bar{a} \mathbf{S} - \mathbf{R} \bar{a} \mathbf{G} ] [\mathbf{I}-\mathbf{R}] \right)^{-1}  \mathbf{e} \right)^{-1} \notag \\ \text{and}\quad \mathbf{x}_i & =\left\{\begin{matrix}
 x_\circ \mathbf{C} [\mathbf{I}-  a \mathbf{G} - \bar{a} \mathbf{S} - \mathbf{R} \bar{a} \mathbf{G} ]   &  i=1\\
  {\mathbf{x}_1} \mathbf{R}^{i-1}  &  i>1
\end{matrix}\right.,
\label{eq_ramp}
\end{align}
\noindent where $\mathbf{R}$ is the \rm{MAM} $\mathbf{R}$ matrix derived via \textbf{Algorithm~\ref{R_matrix_algo}}.
\end{lemma}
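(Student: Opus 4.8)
The plan is to apply the matrix-analytic method (MAM) to the level-independent QBD in \eqref{eq:trans_matrix1}, whose repeating blocks are $\mathbf{A}_0=a\mathbf{S}$, $\mathbf{A}_1=a\mathbf{G}+\bar a\mathbf{S}$, $\mathbf{A}_2=\bar a\mathbf{G}$ and whose boundary blocks are $B=\bar a$, $\mathbf{C}=\{a,\boldsymbol{0}\}$, $\mathbf{E}=\bar a\mathbf{s}$. Since the chain has already been argued to be irreducible, aperiodic and positive recurrent whenever \eqref{stability_condition} holds, Neuts' theorem supplies a matrix-geometric stationary vector: $\mathbf{x}_{i+1}=\mathbf{x}_i\mathbf{R}$ for every $i\ge1$, where $\mathbf{R}$ is the minimal nonnegative solution of $\mathbf{R}=\mathbf{A}_0+\mathbf{R}\mathbf{A}_1+\mathbf{R}^2\mathbf{A}_2$ and is produced by the fixed-point iteration in \textbf{Algorithm~\ref{R_matrix_algo}}. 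Positive recurrence further guarantees $\mathrm{sp}(\mathbf{R})<1$, so the Neumann series $(\mathbf{I}-\mathbf{R})^{-1}=\sum_{j\ge0}\mathbf{R}^{j}$ converges; this is what makes the normalization step below legitimate.

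Next I would write the global balance equations $\mathbf{x}\mathbf{P}=\mathbf{x}$ block by block. For levels $i\ge2$ the equation $\mathbf{x}_{i-1}\mathbf{A}_0+\mathbf{x}_i\mathbf{A}_1+\mathbf{x}_{i+1}\mathbf{A}_2=\mathbf{x}_i$ is satisfied identically once $\mathbf{x}_i=\mathbf{x}_1\mathbf{R}^{i-1}$ and the $\mathbf{R}$-equation are inserted, so the only genuinely binding constraints are the level-$0$ equation $x_\circ B+\mathbf{x}_1\mathbf{E}=x_\circ$ and the level-$1$ equation $x_\circ\mathbf{C}+\mathbf{x}_1\mathbf{A}_1+\mathbf{x}_2\mathbf{A}_2=\mathbf{x}_1$. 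Substituting $\mathbf{x}_2=\mathbf{x}_1\mathbf{R}$ into the latter gives $x_\circ\mathbf{C}=\mathbf{x}_1\bigl(\mathbf{I}-\mathbf{A}_1-\mathbf{R}\mathbf{A}_2\bigr)$, hence $\mathbf{x}_1=x_\circ\mathbf{C}\bigl(\mathbf{I}-\mathbf{A}_1-\mathbf{R}\mathbf{A}_2\bigr)^{-1}=x_\circ\mathbf{C}\bigl[\mathbf{I}-a\mathbf{G}-\bar a\mathbf{S}-\mathbf{R}\bar a\mathbf{G}\bigr]^{-1}$ after plugging in the blocks, and $\mathbf{x}_i=\mathbf{x}_1\mathbf{R}^{i-1}$ for $i>1$. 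I would then check that the level-$0$ equation $x_\circ\bar a+\mathbf{x}_1\bar a\mathbf{s}=x_\circ$ is automatically consistent, using $\mathbf{s}=\mathbf{e}-\mathbf{S}\mathbf{e}$ together with $(\mathbf{S}+\mathbf{G})\mathbf{e}=\mathbf{e}$, so that it imposes no new restriction. Finally $x_\circ$ is pinned down by $\mathbf{x}\mathbf{e}=1$: writing $x_\circ+\sum_{i\ge1}\mathbf{x}_i\mathbf{e}=x_\circ+\mathbf{x}_1(\mathbf{I}-\mathbf{R})^{-1}\mathbf{e}=1$ and inserting the expression for $\mathbf{x}_1$ yields $x_\circ=\bigl(1+\mathbf{C}\bigl[\mathbf{I}-a\mathbf{G}-\bar a\mathbf{S}-\mathbf{R}\bar a\mathbf{G}\bigr]^{-1}(\mathbf{I}-\mathbf{R})^{-1}\mathbf{e}\bigr)^{-1}$, which is the claimed formula after combining the two inverses into the inverse of a single product.

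The step I expect to be the main obstacle is establishing that the two matrices being inverted are nonsingular. Nonsingularity of $\mathbf{I}-\mathbf{R}$ is immediate from $\mathrm{sp}(\mathbf{R})<1$. For $\mathbf{I}-\mathbf{A}_1-\mathbf{R}\mathbf{A}_2$ I would invoke the standard MAM fact that, for an irreducible positive recurrent QBD, $\mathbf{U}:=\mathbf{A}_1+\mathbf{R}\mathbf{A}_2$ is the (strictly sub-stochastic) transition matrix of the level process taboo to lower levels, so $\mathrm{sp}(\mathbf{U})<1$ and $(\mathbf{I}-\mathbf{U})^{-1}=\sum_{j\ge0}\mathbf{U}^{j}$ exists; here $\mathbf{A}=\mathbf{A}_0+\mathbf{A}_1+\mathbf{A}_2=\mathbf{S}+\mathbf{G}$ is stochastic and \eqref{stability_condition} is precisely the mean-drift condition that makes the QBD positive recurrent, so the hypothesis is met. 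Once invertibility is secured, the remaining manipulations are purely algebraic, and uniqueness of $\mathbf{x}$ follows from uniqueness of the stationary distribution of the ergodic chain.
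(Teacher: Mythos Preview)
Your approach is essentially the same as the paper's: derive $\mathbf{x}_i=\mathbf{x}_1\mathbf{R}^{i-1}$ from the matrix-geometric structure, solve the level-$1$ boundary equation $\mathbf{x}_1=x_\circ\mathbf{C}+\mathbf{x}_1(\mathbf{A}_1+\mathbf{R}\mathbf{A}_2)$ for $\mathbf{x}_1$ in terms of $x_\circ$, and then fix $x_\circ$ via the normalization $x_\circ+\mathbf{x}_1(\mathbf{I}-\mathbf{R})^{-1}\mathbf{e}=1$. Your write-up is more detailed than the paper's (which simply cites the MAM references for the $\mathbf{R}$-recursion and states the boundary and normalization equations), in particular your explicit justification of the invertibility of $\mathbf{I}-\mathbf{R}$ and $\mathbf{I}-\mathbf{A}_1-\mathbf{R}\mathbf{A}_2$ via $\mathrm{sp}(\mathbf{R})<1$ and sub-stochasticity of $\mathbf{U}$ is a welcome addition; note also that your derivation gives $\mathbf{x}_1=x_\circ\mathbf{C}[\mathbf{I}-a\mathbf{G}-\bar a\mathbf{S}-\mathbf{R}\bar a\mathbf{G}]^{-1}$, which is what the paper's $x_\circ$ formula requires, even though the inverse appears to be dropped by a typo in the displayed $\mathbf{x}_1$ line of the lemma statement.
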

\begin{proof}
 $x_\circ$ and $\mathbf{x}_1$ are obtained by solving the boundary equation $\mathbf{x}_1=x_\circ\mathbf{C}+\mathbf{x}_1(\mathbf{A}_1+\mathbf{R}\mathbf{A}_2)$ and normalization condition $x_\circ+\mathbf{x}_1 [\mathbf{I}-\mathbf{R}]^{-1}\mathbf{e}=1$, where $\mathbf{A}_1$ and $\mathbf{A}_2$ are defined in \eqref{eq:trans_matrix1}. Then $\mathbf{x}_i$ in \eqref{eq_ramp} follows from the definition of the $\mathbf{R}$ matrix~\cite{MAM, alfa_DTMC}, which is the minimal non-negative solution of $\mathbf{R}=\mathbf{A}_0+\mathbf{R}\mathbf{A}_1+\mathbf{R}^2\mathbf{A}_2$.
 \end{proof}
\begin{remark}
Since neither $\mathbf{S}$ nor $\mathbf{G}$ is a rank one matrix, the \rm{MAM} $\mathbf{R}$ matrix cannot be expressed via an explicit expression and is determined via the numerical algorithm given in \textbf{Algorithm~\ref{R_matrix_algo}}.
\end{remark}

\begin{figure}
\begin{algorithm}[H]
\small
Initialize $\mathbf{R}_{[0]}$ with zeros.\\
\While { $\left|\mathbf{R}_{[m]} -\mathbf{R}_{[m-1]} \right| \geq \epsilon $  } {
1- Calculate $\mathbf{R}_{[m+1]}$, using $\mathbf{R}_{[m+1]}=\mathbf{A}_0+\mathbf{R}_{[m]}\mathbf{A}_1+\mathbf{R}_{[m]}^2\mathbf{A}_2$.\\
2- Increment $m$.}
  return $\mathbf{R} \leftarrow \mathbf{R}_{[m]}$.		
 \caption{Numerical computation of the \rm{MAM} $\mathbf{R}$ matrix.}
 \label{R_matrix_algo}
 \normalsize
\end{algorithm}
\end{figure}

Using the steady-state solution in Lemma~\ref{lemma_dis}, the vector $\boldsymbol{\varphi} = [{\varphi}_{\text{\rm RA}}, {\varphi}_{\text{\rm Tx}_1} , {\varphi}_{\text{\rm Tx}_2}, \cdots,  {\varphi}_{\text{\rm Tx}_N}]$, which contains the probability of being in each phase regardless of the queue state, can be obtained as
\begin{align}
\boldsymbol{\varphi}&=\mathbf{x}_1 + \mathbf{x}_2 + \mathbf{x}_3 + \cdots = \mathbf{x}_1 + \mathbf{x}_1 \mathbf{R}+ \mathbf{x}_1 \mathbf{R}^{2} + \cdots \notag \\
&=\mathbf{x}_1[\mathbf{I}-\mathbf{R}]^{-1},
\label{eq_marg}
\end{align}
where \eqref{eq_marg} follows from the fact that $\mathbf{R}$ has a spectral radius less than one  \cite{alfa_DTMC}. 

\subsubsection{RA-UL Scheme} For the RA-UL scheme shown in Fig.~\ref{fig_Markov_baseline}, the queueing model in  \eqref{eq:trans_matrix1} is populated as follows: $B=\bar{a}$, $\mathbf{C} \! = a$, $\mathbf{E}  = \bar{a} p$, $\mathbf{A}_0 ={a} \bar{p}$, $\mathbf{A}_1 =\bar{a} \bar{p} + a p $, and $\mathbf{A}_2 =\bar{a} p$. Hence,  the RA-UL queueing model reduces to a Geo/Geo/1 queue with the following transition matrix
\vspace{-2mm}

\begin{align}
\mathbf{P}=\begin{bmatrix}
\bar{a} & a &  &  &  & \\
\bar{a} p & \bar{a} \bar{p} + ap & a\bar{p} &  &  &\\
&\bar{a} p & \bar{a} \bar{p} + ap & a\bar{p} &   &\\
&  & \bar{a} p & \bar{a} \bar{p} + ap & a\bar{p} &  &\\
&  &     &\ddots  &\ddots  &\ddots
\end{bmatrix}.
\label{NB_MAT}
\end{align}

The sufficient stability condition for the RA-UL scheme with the transition matrix in \eqref{NB_MAT} reduces from \eqref{stability_condition} to
\begin{align} \label{condition_NB}
\bar{a}p >  a\bar{p} .
\end{align}

Given that the stability condition in \eqref{condition_NB} is satisfied, let $\mathbf{x}=[x_\circ,\ x_1,\ x_2,\ \hdots,\ ]$ be the steady-state distribution of RA-UL queueing model, where $x_i$ represents the probability of having $i$ packets in the buffer.  Following the same procedure in Lemma~\ref{lemma_dis}, solving the system of equations $\mathbf{x}  \mathbf{P}=\mathbf{x}$ and $\mathbf{x}  \mathbf{e}=1$ with the transition matrix in \eqref{NB_MAT} yields to the following steady-state solution
\begin{align}
\begin{matrix}
x_i=R^i \frac{x_\circ}{\bar{p}}, & \text{where} \quad R=\frac{a\bar{p}}{ \bar{a}p} & \text{and} \quad x_\circ=\frac{p-a}{p}.
\end{matrix}
\label{eq_NB}
\end{align}

\subsection{Iterative Solution \& Performance Assessment }\label{sic:Performance_Analysis3}
Sections~\ref{sic:Performance_Analysis1} and \ref{sic:Performance_Analysis2} show clear interdependence between stochastic geometry (i.e., spatial) and queueing theory (i.e., temporal) analysis. Particularly, the steady-state solution ${x}_\circ$ and $\boldsymbol{\varphi}$ are required in Section~\ref{sic:Performance_Analysis1} to characterize the interference and derive the packet departure rates via stochastic geometry. Meanwhile, the packet departure rates are required to derive the DTMC steady-state solution in Section~\ref{sic:Performance_Analysis2} via queueing theory. Hence, we employ an iterative solution to simultaneously solve the system of equations obtained via stochastic geometry and queueing theory analysis. By virtue of the fixed point theorem, such iterative solutions are shown to converge to a unique solution~\cite{Heanggi_Ra1, Zhuang, Gharbieh_tcom, Chisci, interacting_queues2, SarElaFouNou:07}. The output of the iterative spatiotemporal algorithm provides the steady-state probabilities and packet departure rates for the considered transmission scheme, which are then used to define the following performance metrics.
\begin{itemize}
\item \textbf{Average buffer Size $\mathbb{E}\left\{\mathcal{Q}_{\text{\rm L}}\right\}$:}  Let $\mathcal{Q}_{\text{\rm L}}$ be the instantaneous buffer size of the test device, then the average buffer size in given by
 \begin{align} \label{ave_queue}
\mathbb{E}\left\{\mathcal{Q}_{\text{\rm L}}\right\} &=\!\! \sum_{\mathtt{n}=2}^{\infty} (\mathtt{n}-1) \mathbb{P} \left\{\mathcal{Q}_{\text{\rm L}}=\mathtt{n} \right\} \!=\!\sum_{\mathtt{n}=2}^{\infty} (\mathtt{n}-1) \sum_{j} x_{\mathtt{n},j},
\end{align}
where $x_{\mathtt{n},j}$ denotes the probability of being in level $\mathtt{n}$ (i.e., the buffer contain $i$ packets) and phase $j$.

\item {\textbf{ Waiting Time in the Queue}: Let $\mathcal{W_{\text{q}}}$ be the queueing delay (i.e., the number of time slots spent in the buffer until uplink scheduling) experienced by a given packet, then the average delay ($\mathbb{E}\left\{\mathcal{W_{\text{q}}}\right\}$), the variance ($\text{Var}\left(\mathcal{W_{\text{q}}}\right)$), and the index of dispersion ($D$) can be evaluated, respectively, as:
\begin{align}
\mathbb{E}\left\{\mathcal{W_{\text{q}}}\right\}&=\sum_{j=1}^{\infty}j \mathbb{P}\left\{\mathcal{W_{\text{q}}}=j \right\},\label{wait_mean}\\
\text{Var}\left(\mathcal{W_{\text{q}}}\right)&=\sum_{j=1}^{\infty}\left(j-\mathbb{E}\left\{\mathcal{W_{\text{q}}}\right\}\right)^2 \; \mathbb{P}\left\{\mathcal{W_{\text{q}}}=j \right\},\label{wait_var}\\ 
D&=\frac{\text{Var}\left(\mathcal{W_{\text{q}}}\right)}{\mathbb{E}\left\{\mathcal{W_{\text{q}}}\right\}}. \label{ind_dis}
\end{align}}
\vspace{-5mm}
\item \textbf{Stability Region}:  Defines the system parameters  where the stability conditions \eqref{stability_condition} and \eqref{condition_NB} are satisfied for, respectively, the SC-UL and RA-UL schemes. Operating within the stability region guarantees bounded average delay. Otherwise,  the network fails to satisfy the spatiotemporal traffic requirements of the IoT devices, in which the average delay and the average number of packets in the buffers become infinite.
\end{itemize}

The spatiotemporal iterative solution and performance of each transmission scheme are presented in the sequel.

\subsubsection{SC-UL Scheme} The spatiotemporal iterative algorithm for the SC-UL scheme is given in the following theorem.

\begin{theorem}
\label{theorem1}
The probability  of being idle and the steady-state sub-stochastic vector $\boldsymbol{\varphi}$ for the transmission phases for a generic device in the SC-UL scheme is obtained via {\bf Algorithm \ref{dist_algo}}.

\begin{figure}
\begin{algorithm}[H]
\small
Initialize   $x_\circ$ and $\boldsymbol{\varphi}$ such that $x_\circ + \boldsymbol{\varphi} \times \mathbf{e} = 1$.\\
\While { $\left|\boldsymbol{\varphi}_{[m]} - \boldsymbol{\varphi}_{[m-1]} \right| \geq \epsilon $  } {
1- Evaluate $P_{\text{\rm RA}}$, $P_{\text{\rm Tx}}$, and $P_{\text{\rm aval}}$ in  \eqref{eq:RA_success}, \eqref{eq:Out2}, and \eqref{eq:Paval} respectively,  using $\boldsymbol{\varphi}_{[m-1]}$.\\
2- Construct $\mathbf{S}$ and $\mathbf{G}$ using $P_{\text{\rm RA}}$ and $P_{\text{\rm Tx}}$ as in \eqref{S_G_matrix}. \\
3- Construct $\boldsymbol{\nu} = [\nu_1, \nu_2, \dots, \nu_{N+1}]$ such that $\nu_1= 1/(1+N \; P_{\text{\rm RA}}P_{\text{\rm aval}}) $ and $\nu_l= \nu_1 P_{\text{\rm RA}}P_{\text{\rm aval}}$ for $l\in\{2,\dots,N+1\}$.\\
4- Check the stability condition in \eqref{stability_condition}. \\
\eIf{$\boldsymbol{\nu} \bar{a}\mathbf{G}\mathbf{e}>\boldsymbol{\nu}  a\mathbf{S}  \mathbf{e}$}{
  Calculate $\mathbf{R}_{[m]}$ from Algorithm 1.\\
   Calculate $\mathbf{x}_1$ from Lemma~\ref{lemma_dis} and $\boldsymbol{\varphi}_{[m]}$ from \eqref{eq_marg}.\;
   }{ return unstable network \\
Terminate the algorithm.}
  5- Increment $m$.}
  Return $\mathbf{R} \leftarrow \mathbf{R}_{[m]}$, $\boldsymbol{\varphi} \leftarrow \boldsymbol{\varphi}_{[m]}$,  and $x_\circ \leftarrow 1-\boldsymbol{\varphi}_{[m]}\mathbf{e} $.		
 \caption{Computation of $x_\circ$ and $\boldsymbol{\varphi}$ for SC-UL Scheme.}
 \label{dist_algo}
 \normalsize
\end{algorithm}
\end{figure}
\begin{proof}
The proof follows from the RA success probability in Lemma~\ref{lemma_RA_success}, the UL transmission success probability in Lemma~\ref{lemma_Tx_success}, the steady-state distribution in Lemma~\ref{lemma_dis}, and the numerical computation of the $\mathbf{R}$  matrix in \bf{Algorithm~\ref{R_matrix_algo}}.
\end{proof}
\end{theorem}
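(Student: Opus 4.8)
The plan is to verify that every step of \textbf{Algorithm~\ref{dist_algo}} is a legitimate computation sanctioned by a previously established result, and then to argue that the iteration converges to the unique fixed point characterizing the joint spatiotemporal equilibrium. First I would observe that the only exogenous quantities entering the stochastic-geometry expressions of Section~\ref{sic:Performance_Analysis1} for the SC-UL scheme are the phase probabilities collected in $\boldsymbol{\varphi}$: the intercell interference Laplace transform \eqref{Eq:laplase_Inter} and the neighbor PDF \eqref{pdf_users} depend on $\boldsymbol{\varphi}$ only through $\varphi_{\text{\rm RA}}\tilde{\mu}$, the availability probability \eqref{eq:Paval} depends only on $\sum_{k=1}^{N-1}\varphi_{\text{\rm Tx}_k}$, and $P_{\text{\rm Tx}}$ in \eqref{eq:Out2} is a constant. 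Hence Step~1 is precisely the well-defined map $\boldsymbol{\varphi}\mapsto(P_{\text{\rm RA}},P_{\text{\rm Tx}},P_{\text{\rm aval}})$ furnished by Lemma~\ref{lemma_RA_success}, Lemma~\ref{lemma_Tx_success}, and \eqref{eq:Paval}.

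Next I would show that Steps~2--5 turn this triple into the steady-state descriptors. The sub-stochastic blocks $\mathbf{S}$ and $\mathbf{G}$ are populated exactly as in \eqref{S_G_matrix}, so that the QBD transition matrix \eqref{eq:trans_matrix1} with $\mathbf{A}=\mathbf{A}_0+\mathbf{A}_1+\mathbf{A}_2=\mathbf{S}+\mathbf{G}$ is completely determined; the special bidiagonal-plus-corner structure of $\mathbf{S}+\mathbf{G}$ makes the stationary vector $\boldsymbol{\nu}$ of $\mathbf{A}$ available in the closed form stated in Step~3, and checking \eqref{stability_condition} decides admissibility by Loynes' theorem~\cite{loyens}. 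On the stable branch the chain is irreducible, aperiodic, and positive recurrent, so the steady state exists and is given explicitly by Lemma~\ref{lemma_dis} using the MAM matrix $\mathbf{R}$ produced by \textbf{Algorithm~\ref{R_matrix_algo}}; finally \eqref{eq_marg} recovers the updated $\boldsymbol{\varphi}_{[m]}=\mathbf{x}_1[\mathbf{I}-\mathbf{R}]^{-1}$ and $x_\circ=1-\boldsymbol{\varphi}_{[m]}\mathbf{e}$. Composing the two maps defines a self-map $\mathcal{T}$ on the compact convex set $\{\boldsymbol{\varphi}\ge\mathbf{0}:\boldsymbol{\varphi}\mathbf{e}\le 1\}$ whose fixed points are exactly the descriptors that simultaneously satisfy the stochastic-geometry equations of Section~\ref{sic:Performance_Analysis1} and the queueing equations of Section~\ref{sic:Performance_Analysis2}.

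Finally I would invoke the fixed-point argument. The success probabilities are continuous and monotone in $\boldsymbol{\varphi}$ (they decrease as the interfering activity $\varphi_{\text{\rm RA}}\tilde{\mu}$ and $\sum_k\varphi_{\text{\rm Tx}_k}$ grow), the MAM steady-state solution depends continuously on the entries of $\mathbf{S}$ and $\mathbf{G}$, and $\mathcal{T}$ is order-preserving on the simplex; by the same monotone-iteration / Brouwer fixed-point reasoning used in~\cite{Heanggi_Ra1, Zhuang, Gharbieh_tcom, Chisci, interacting_queues2, SarElaFouNou:07}, $\mathcal{T}$ has a unique fixed point to which the iterates $\boldsymbol{\varphi}_{[m]}$ converge, which is the claimed output of \textbf{Algorithm~\ref{dist_algo}}. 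The main obstacle is this last step: since $\mathbf{R}$ is available only numerically, its dependence on $(P_{\text{\rm RA}},P_{\text{\rm Tx}},P_{\text{\rm aval}})$ must be controlled well enough to show that $\mathcal{T}$ is a contraction (or, failing that, that the monotone iteration is squeezed between converging lower and upper sequences). I expect to dispatch this exactly as in~\cite{Gharbieh_tcom}, where the identical coupling of MAM and stochastic geometry is shown to admit a unique spatiotemporal fixed point.
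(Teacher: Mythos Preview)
Your proposal is correct and follows the same approach as the paper: the paper's proof is a one-sentence pointer to Lemmas~\ref{lemma_RA_success}, \ref{lemma_Tx_success}, \ref{lemma_dis} and Algorithm~\ref{R_matrix_algo}, which is exactly the step-by-step justification you spell out. Your added fixed-point/convergence discussion goes beyond what the paper actually proves in the theorem itself---the paper simply asserts convergence earlier in Section~\ref{sic:Performance_Analysis3} by citing~\cite{Heanggi_Ra1, Zhuang, Gharbieh_tcom, Chisci, interacting_queues2, SarElaFouNou:07} without further argument---so your elaboration is consistent with, but more detailed than, the original.
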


Using the output of {\bf Algorithm \ref{dist_algo}}, the average queue length for a stable device is given by
\begin{align} \label{ave_queue_power}
\mathbb{E}\left\{\mathcal{Q}_{\text{\rm L}}\right\} &= (\mathbf{x}_2 + 2\mathbf{x}_3 + 3\mathbf{x}_4 + \cdots) \; \bold{e} = \mathbf{x}_2(1 +2\bold{R} + 3\bold{R} + \cdots)\; \bold{e}\notag \\
&=\bold{x}_1 \bold{R} (\bold{I}-\bold{R})^{-2}\bold{e},
\end{align}
where \eqref{ave_queue_power} follows from the fact that $\mathbf{R}$ has a spectral radius less than one  \cite{alfa_DTMC}.

{Moreover, the waiting time in the queue is given by \cite{alfa_DTMC}:
\begin{align}\label{wait_pdf}
\mathbb{P} \left\{\mathcal{W_{\text{q}}}=0 \right\} &=x_\circ, \; \text{and} \quad \mathbb{P}\left\{\mathcal{W_{\text{q}}}=j \right\} =\displaystyle{\sum\limits_{v=1}^{j} \mathbf{x}_v \mathbf{B}_j^{(v)}}, \; \text{with} \notag  \\
\quad \mathbf{B}^{(k)}_j  &= \left\{\begin{matrix}
\textbf{S}^{j-1} \mathbf{G}  & k=1, j \geq 1  \\
\mathbf{G}^k  &   j=k, k\geq 1 \\
\mathbf{S} \; \mathbf{B}^{(k)}_{j-1}+\mathbf{G} \;\mathbf{B}^{(k-1)}_{j-1}  & k \geq j\geq 1 \\
\end{matrix}\right..
\end{align}
The average value, the variance, and the index of dispersion can be computed by substituting \eqref{wait_pdf} in \eqref{wait_mean},  \eqref{wait_var}, and  \eqref{ind_dis}, respectively. }


\begin{figure}
	\begin{algorithm}[H]
		\small
		Initialize $\mathbf{x}_{[0]}$.\\
		\While { $\left|\mathbf{x}_{[m]} - \mathbf{x}_{[m-1]} \right| \geq \epsilon $  } {
		    1- Calculate $p$ using $\mathbf{x}_{[m-1]}$ and Eq. \eqref{eq:TX_NB_success}.\\
			2- Update the transition matrix $\mathbf{P}$ in \eqref{NB_MAT} with $p$.  \\
		    3- Check the stability condition in \eqref{condition_NB}. \\
			\eIf{$\bar{a} p > a \bar{p}$}{
			Obtain $\mathbf{x}_{[m]}$ by solving the queueing system in Eq. \eqref{eq_NB}.\;
			}{ return unstable network \\
				Terminate the algorithm.}
			5- Increment $m$.}
		Return $\hat{\mathbf{x}} \leftarrow \mathbf{x}_{[m]}$ and $\hat{p} \leftarrow p $.		
		\caption{Computation of the Steady-State Distribution Vector $\mathbf{x}$ for RA-UL Scheme.}
		\label{dist_algo_NB}
		\normalsize
	\end{algorithm}
\end{figure}

\subsubsection{RA-UL Scheme}The spatiotemporal iterative algorithm for the SC-UL scheme is given in the following corollary.

\begin{corollary}
\label{corollary1}
The probability  of being idle for the RA-UL scheme is obtained via {\bf Algorithm \ref{dist_algo_NB}}.

\begin{proof}
Similar to Theorem 1.
\end{proof}
\end{corollary}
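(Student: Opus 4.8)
The plan is to mirror the proof of Theorem~\ref{theorem1}, specializing every ingredient to the simpler Geo/Geo/1 model of the RA-UL scheme depicted in Fig.~\ref{fig_Markov_baseline}. First I would recall the two directions of the spatiotemporal coupling. On the spatial side, Lemma~\ref{lem:out_NB_IoT} gives the per-slot transmission success probability $p$ as a function of the idle probability $x_\circ$, since $\bar{x}_\circ\,\mu'$ is exactly the intensity of active intracell/intercell interferers entering \eqref{Eq:laplase_Inter11} and \eqref{pdf_users1}. On the temporal side, the closed-form steady-state solution \eqref{eq_NB} gives $x_\circ=(p-a)/p$ as a function of $p$. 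Composing the two maps produces a single fixed-point equation in $x_\circ$ (equivalently in $p$), and Algorithm~\ref{dist_algo_NB} is precisely the fixed-point iteration for this map, with the stability test \eqref{condition_NB} inserted so that \eqref{eq_NB} is a legitimate probability vector at every step.

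Next I would justify each line of the algorithm. Step~1 evaluates $p$ from \eqref{eq:TX_NB_success} using the current estimate $\mathbf{x}_{[m-1]}$, which is legitimate by Lemma~\ref{lem:out_NB_IoT} under Approximations~\ref{app1} and~\ref{app2}. Step~2 repopulates \eqref{NB_MAT}. Step~3 checks \eqref{condition_NB}, which by Loynes' theorem~\cite{loyens} is the necessary and sufficient condition for the Geo/Geo/1 queue to be positive recurrent. When the condition holds, the remaining step returns the unique stationary vector through the explicit formulas in \eqref{eq_NB}; their derivation follows the boundary-equation-plus-normalization argument already carried out in the proof of Lemma~\ref{lemma_dis}, except that the MAM $\mathbf{R}$ matrix collapses to the scalar $R=a\bar{p}/(\bar{a}p)$ because $\mathbf{A}_0=a\bar{p}$ is (trivially) rank one, so Algorithm~\ref{R_matrix_algo} is not invoked.

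The substantive point, and the step I expect to be the main obstacle, is convergence of the iteration to a unique limit. I would argue this by the fixed-point theorem as in \cite{Heanggi_Ra1, Zhuang, Gharbieh_tcom, Chisci}. Restricted to the stable region, the composed map $x_\circ\mapsto\Phi(x_\circ)$ is a continuous self-map of a compact sub-interval of $[0,1]$, so a fixed point exists by Brouwer's theorem. For uniqueness I would exploit monotonicity: increasing $x_\circ$ lowers the interferer intensity $\bar{x}_\circ\mu'$, hence raises $p$ via \eqref{eq:TX_NB_success}, hence raises $(p-a)/p$, so $\Phi$ is nondecreasing; a slope estimate showing $\Phi$ is a contraction (or the standard coupling argument for interacting queues) then rules out multiple crossings of the diagonal and also forces the iterates $\mathbf{x}_{[m]}$ to converge. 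Finally, feeding the converged pair $(\hat{x}_\circ,\hat{p})$ back into \eqref{eq_NB} yields the complete steady-state distribution, which establishes the RA-UL analogue of Theorem~\ref{theorem1}.
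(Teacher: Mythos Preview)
Your proposal is correct and follows essentially the same route as the paper: the paper's proof is literally ``Similar to Theorem~\ref{theorem1}'', which amounts to invoking Lemma~\ref{lem:out_NB_IoT} for the spatial side, the Geo/Geo/1 closed form \eqref{eq_NB} for the temporal side, and the fixed-point iteration of Algorithm~\ref{dist_algo_NB} to couple them. Your expansion of the convergence/uniqueness argument (Brouwer plus monotonicity/contraction) goes beyond what the paper actually proves---it simply appeals to the cited fixed-point references \cite{Heanggi_Ra1, Zhuang, Gharbieh_tcom, Chisci} in Section~\ref{sic:Performance_Analysis3}---so that portion is additional detail rather than a departure.
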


Using the output of {\bf Algorithm \ref{dist_algo_NB}}, the average queue length  for a stable device in the RA-UL scheme is
 \begin{align} \label{ave_queue_baseline}
\mathbb{E}\left\{\mathcal{Q}_{\text{\rm L}}\right\} = \frac{a^2 (1-\hat{p})\hat{x_\circ}}{(\hat{p}-a)^2}.
\end{align}
{Moreover, the waiting time in the queue is given by:
\begin{align}\label{wait_geo}
\mathbb{P} \left\{\mathcal{W_{\text{q}}}=0 \right\} &=\hat{x_\circ}, \; \text{and} \quad \mathbb{P}\left\{\mathcal{W_{\text{q}}}=j \right\} =\displaystyle{\sum\limits_{v=1}^{j} \hat{x_v} B_j^{(v)}}, \; \text{with} \notag  \\
\quad B^{(k)}_j  &= \left\{\begin{matrix}
(1-\hat{p})^{j-1}\hat{p}  & k=1, j \geq 1  \\
\hat{p}^k  &   j=k, k\geq 1 \\
(1-\hat{p}) \; B^{(k)}_{j-1}+\hat{p} \;B^{(k-1)}_{j-1}  & k \geq j\geq 1 \\
\end{matrix}\right.
\end{align}
with an avergae value of 
\begin{align} 
\mathbb{E}\left\{\mathcal{W_{\text{q}}}\right\} = \frac{ a \bar{a}  \hat{x_\circ}}{(\hat{p}-a)^2},
\end{align}
the variance and the index of dispersion can be computed by substituting \eqref{wait_geo} in \eqref{wait_var} and \eqref{ind_dis}. }

\section{ Numerical Results}\label{sec:Results}

\begin{table}[t]
	\centering
	\renewcommand{\arraystretch}{1.3}
	\caption{Simulation Parameters}
	\begin{tabular}{||p{1.2cm}|p{6cm}|p{3cm}|}
		\hline  \textbf{Notation} & \textbf{Description} & \textbf{Value}\\
		\hline $\alpha$ & devices-to-BS ratio & $(0,500]$  device/BS \\
		\hline $\eta$ & path-loss exponent & $4$ \\
		\hline $a$ & geometric arrival parameter & $.1$ \\
		\hline $\rho$  & power control threshold & $ -90$ dBm  \\
		\hline $\sigma^2$ &  noise power & $-90$ dBm\\			
		\hline $\theta_{\text{\rm SR}} $ & detection threshold for RA-SR  & $-7$ dB\\
		\hline $\theta_{\text{\rm Tx}} $ &  detection threshold for {\rm EA-Tx}  & $-5$ dB\\
		\hline $n_{\text{\rm Z}}$ & number of ZC codes dedicated for RA-SR & $64$ code per BS\\
		\hline $N$ &  number of allocated time slots for {\rm EA-Tx} & $3$ and $6$ \\
		\hline $q$ & number of resource blocks for {\rm EA-Tx}  & $50$ and $105$ \\
		\hline $\theta_{\text{\rm UL}} $ &  detection threshold for RA-UL  & $-5$ dB\\
		\hline $n_{\text{\rm c}}$ & number of resource blocks RA-UL  &  $55$ and $110$ \\
		\hline
	\end{tabular}
	\label{table_parameters}
\end{table}

This section validates the developed spatiotemporal model via independent  Monte Carlo simulation and presents some numerical results to assess and compare the performance of the SC-UL and RA-UL schemes. It is important to note that the simulation is used to verify the stochastic geometry analysis for the transmission success probabilities, i.e., to validate Approximations \ref{app1}-\ref{app3} as well as the approximation of PDF of the Voronoi cell area while calculating the distribution of the number of users in the cell. On the other hand, the queueing analysis is exact, and hence, is embedded into the simulation. {In each simulation run, the BSs and IoT devices are realized over a 100 km$^2$ area via independent PPPs according to the steady-state distribution. Each IoT device is associated to its nearest BS and employs channel inversion power control. The collected statistics are taken for devices located within 1 km from the origin to avoid the edge effects. The received SINR for each device is measured  and a successful transmission is  reported if the SINR is greater than the detection threshold $\theta_{\text{\rm Tx}} $ for  {\rm EA-Tx}. On the other hand, only the device with the highest SINR succeeds if its SINR exceeds the UL SINR threshold $\theta_{\text{\rm SR}} $ and  $\theta_{\text{\rm UL}} $ for {\rm RA-SR} and  {\rm RA-UL} transmissions, respectively.  }Without loss in generality, the system parameters used for this section are reported in Table~\ref{table_parameters}. We consider two operating scenarios for the total available bandwidth, namely, $10$ MHz and $20$ MHz. For the SC-UL scheme, the RA-SR takes place over $1$ MHz and each resource block occupies 180 kHz. As a result, the number of available recourse blocks $(q)$ are $50$ for the $10$ MHz and $105$ for the $20$ MHz. On the other hand, all the available spectrum can be utilized for data communications in the RA-UL scheme which makes the available number of resource blocks $55$ for the $10$ MHz and $110$ for the $20$ MHz, which is 10$\%$ more than the {\rm EA-Tx} resource blocks in the SC-UL scheme.

\begin{figure}
	\begin{center}
		\subfigure[Total Bandwidth $10$ MHz.]{\label{fig:RA1} \includegraphics[width=3.15 in, height=2.6 in]{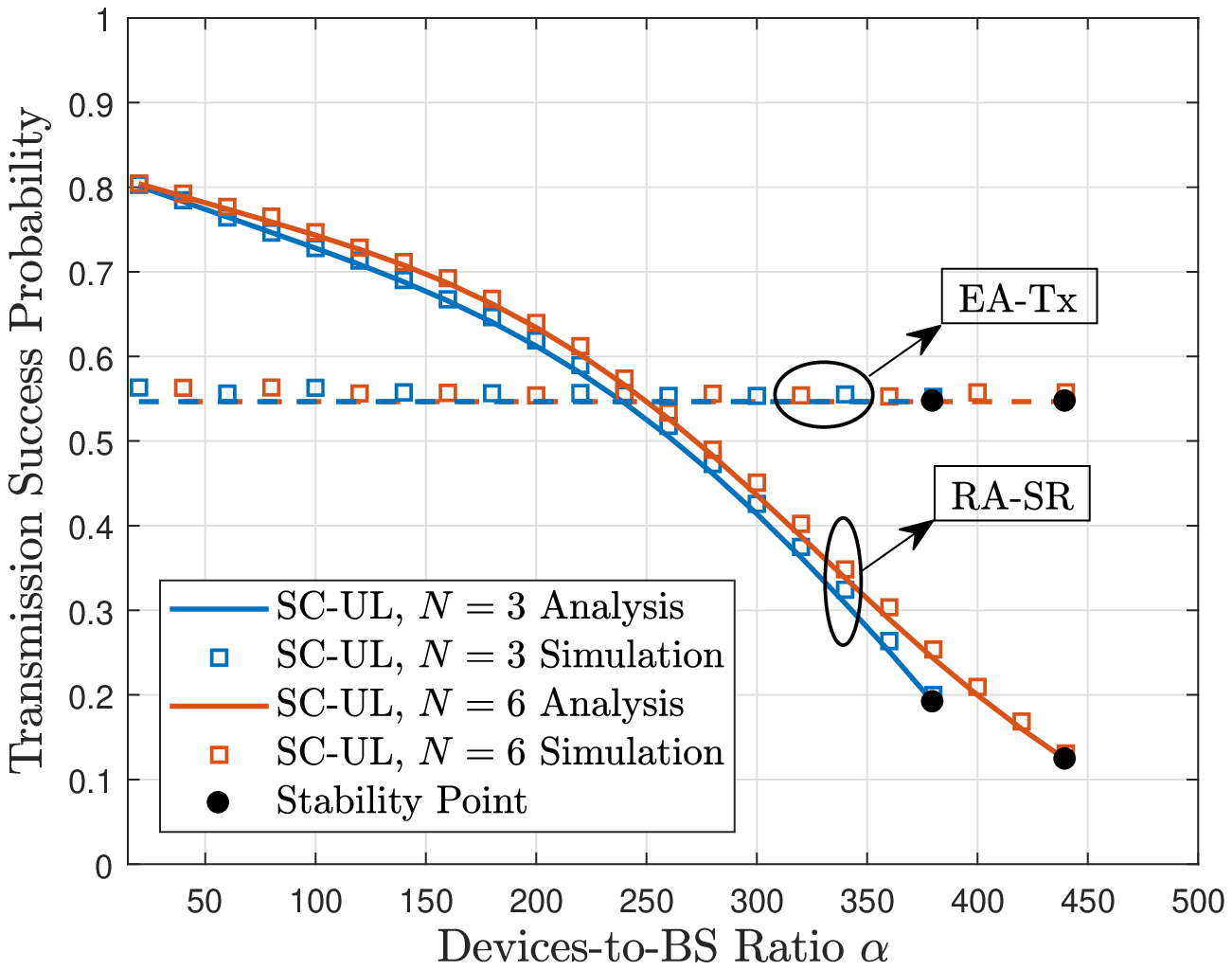}}
		\subfigure[Total Bandwidth $20$ MHz.]{\label{fig:RA2} \includegraphics[width=3.15 in, height=2.6 in]{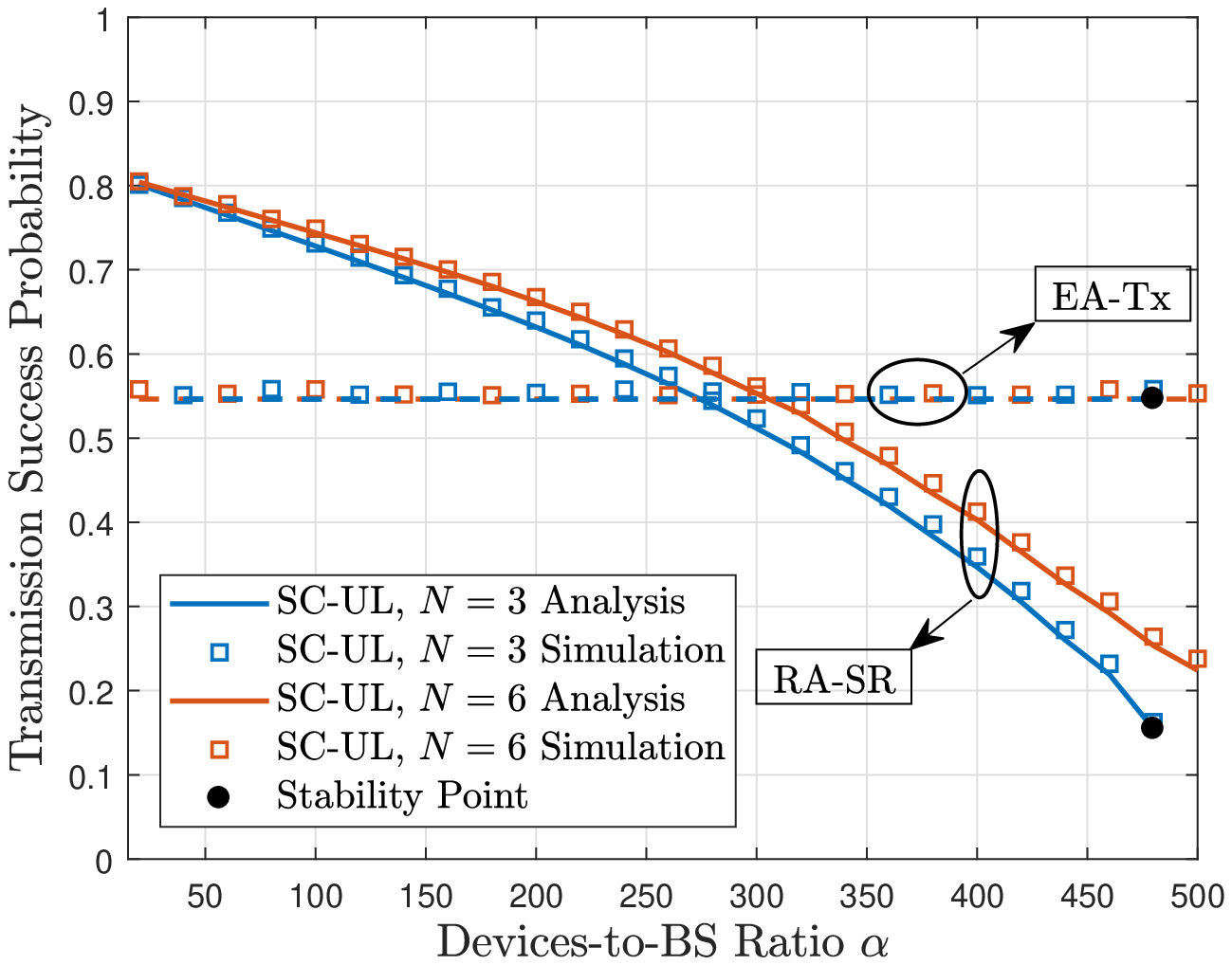}}
	\end{center}
	\vspace{-5mm}
	\caption{ Success probabilities for SC-UL transmission. }
	\label{fig_SINR_RA}
\end{figure}

Fig.~\ref{fig_SINR_RA} shows the transmission success probabilities for the SC-UL scheme at steady-state versus the devices-to-BS ratio $(\alpha)$. It is important to note the close match between the analysis and simulation results which validates the developed mathematical framework. Obviously, by comparing Fig.~\ref{fig:RA1} with Fig.~\ref{fig:RA2}, when the total bandwidth increases the RA-SR success probability increases, this is mainly because of the probability of available resources $P_{\text{\rm aval}}$ increase when there are more UL frequency channels, which in turn reliefs the RA-SR intracell interference by accommodating more devices after RA-SR success. Moreover, Fig.~\ref{fig_SINR_RA} also shows the {\rm EA-Tx} transmission success probabilities. Note that the steady-state value of the {\rm EA-Tx} scheme is less than that of the RA-SR scheme at low device density because $\theta_{\text{\rm Tx}} > \theta_{\text{\rm SR}}$. However, as the device density in the RA-SR increase, the {\rm EA-Tx} success probability outperforms that of the RA-SR scheme despite that fact that  $\theta_{\text{\rm Tx}} > \theta_{\text{\rm SR}}$.  Hence, Fig.~\ref{fig_SINR_RA} shows that {\rm EA-Tx} enforces a constant ${P_{\text{\rm Tx}}}$ despite the value of $\alpha$ by alleviating intracell interference and allowing only one intercell interferer per BS.  Fig.~\ref{fig:RA1} shows that the queue will fall into instability when the devices intensity, or equivalently $\alpha $, goes beyond $380$ because of the limited resource blocks. Note that the results in  Fig.~\ref{fig_SINR_RA} is consistent with eq. \eqref{eq:Out2}. It is important to highlight that the instability point in Fig.~\ref{fig:RA1} is due to the instability of the RA-SR in Fig.~\ref{fig:RA1}. Hence, despite that the {\rm EA-Tx} provision a constant success probability for the scheduled devices, the SC-UL bottleneck is in the SA-RA process.  Hence, Fig.~\ref{fig_SINR_RA} highlights the benefit/drawback of the SC-UL scheme that can provision a certain QoS for scheduled UL transmission upon RA-SR success. 
\begin{figure}[t]
	\begin{center}
		\includegraphics[width=3.2 in]{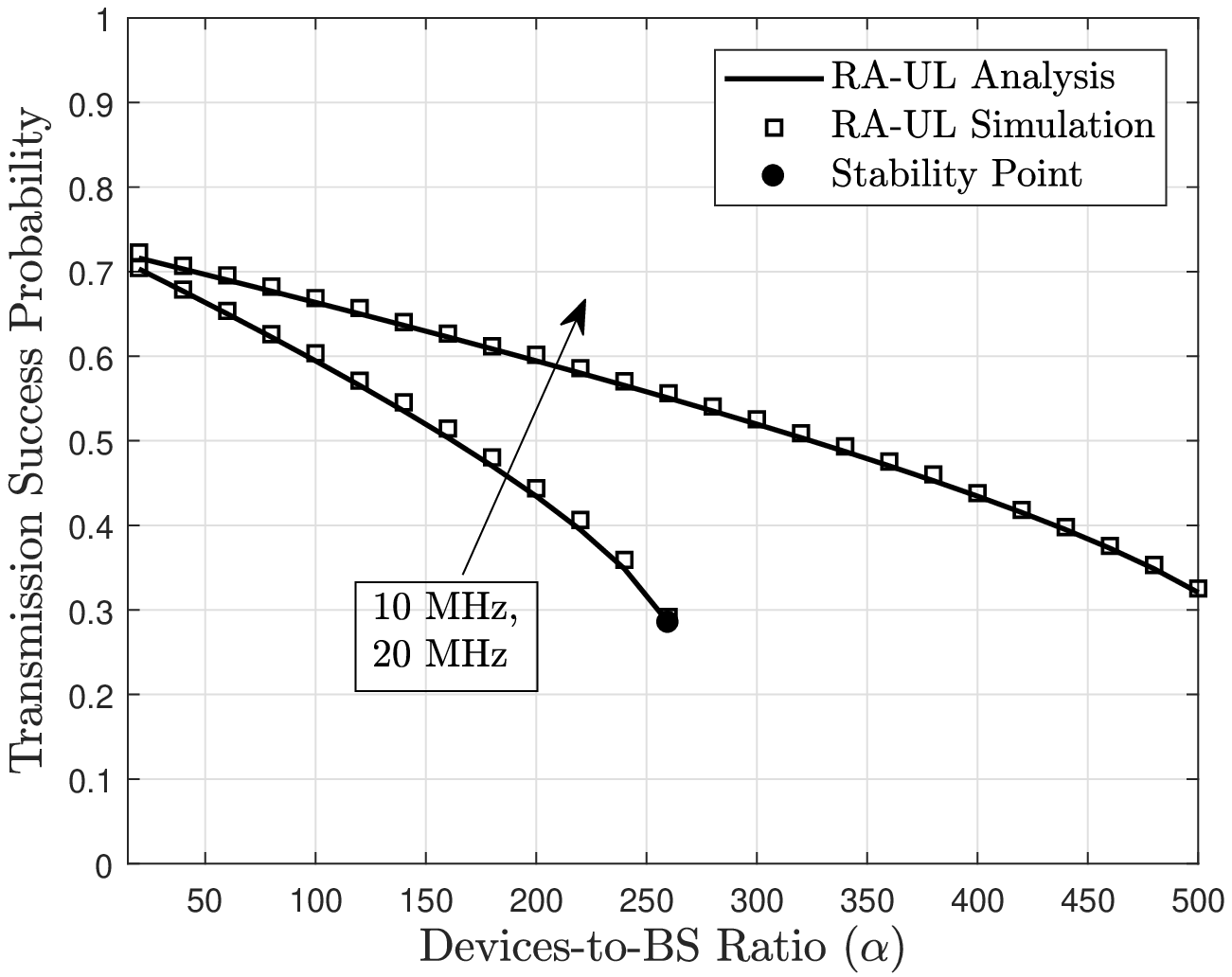}
	\end{center}
	\vspace{-5mm}
	\caption{ Success probability for RA-UL transmission. }\label{fig_SINR_TX_NB}
\end{figure}

%

Fig.~\ref{fig_SINR_TX_NB} shows the RA-UL transmission success probabilities at steady-state versus the devices-to-BS ratio $(\alpha)$. It is important to note the close match between the analysis and simulation results which validates the developed mathematical framework. The figure shows that the performance of the RA-UL transmission is affected by the system load. Hence, the RA-UL scheme cannot provide QoS guarantee for data transmission when compared to the {\rm EA-Tx}. The figure also shows that  the performance of RA-UL can be improved by increasing the number of channels, which diversifies interference and can be used to avoid system instability. {By comparing Fig.~\ref{fig_SINR_RA} and Fig.~\ref{fig_SINR_TX_NB}, the RA-UL shows a better performance than the {\rm EA-Tx} at low device density for $\theta_{\text{\rm Tx}} = \theta_{\text{\rm UL}}$. This is mainly due to the 10$\%$ higher number of resource blocks at the RA-UL scheme, and hence, limited intracel interference at low device density}. However, as the density of the devices increase, the  success probability for the {\rm EA-Tx} scheme outperforms that of RA-UL scheme. It is also worth noting that the success probability for the RA-SR scheme is better than that of the RA-UL scheme because $\theta_{\text{\rm UL}} > \theta_{\text{\rm SR}}$.


\begin{figure}
	\begin{center}
		\subfigure[Total Bandwidth $10$ MHz.]{\label{fig:que1}\includegraphics[width=3.2 in]{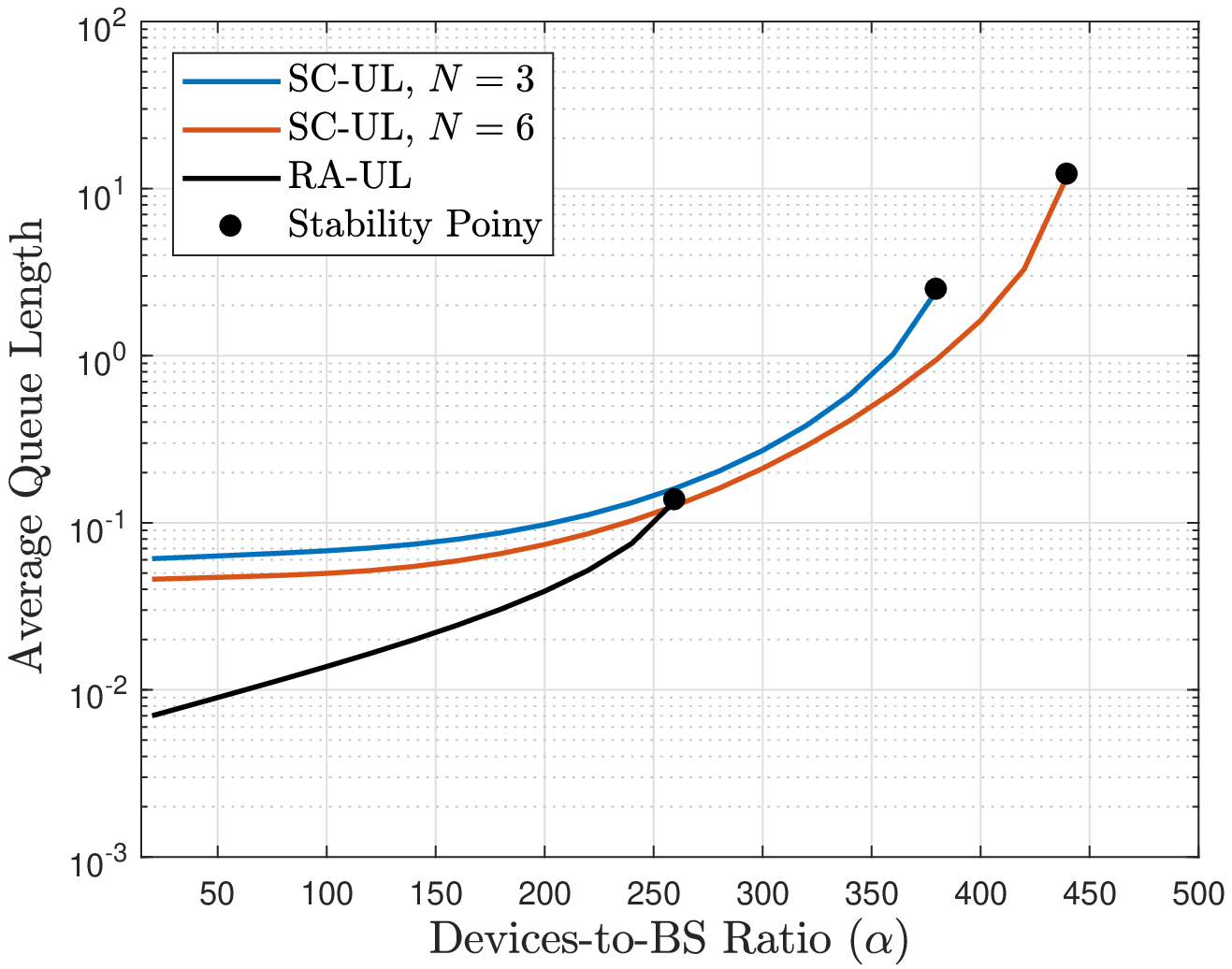}}
		\subfigure[Total Bandwidth $20$ MHz.]{\label{fig:que2}\includegraphics[width=3.2 in]{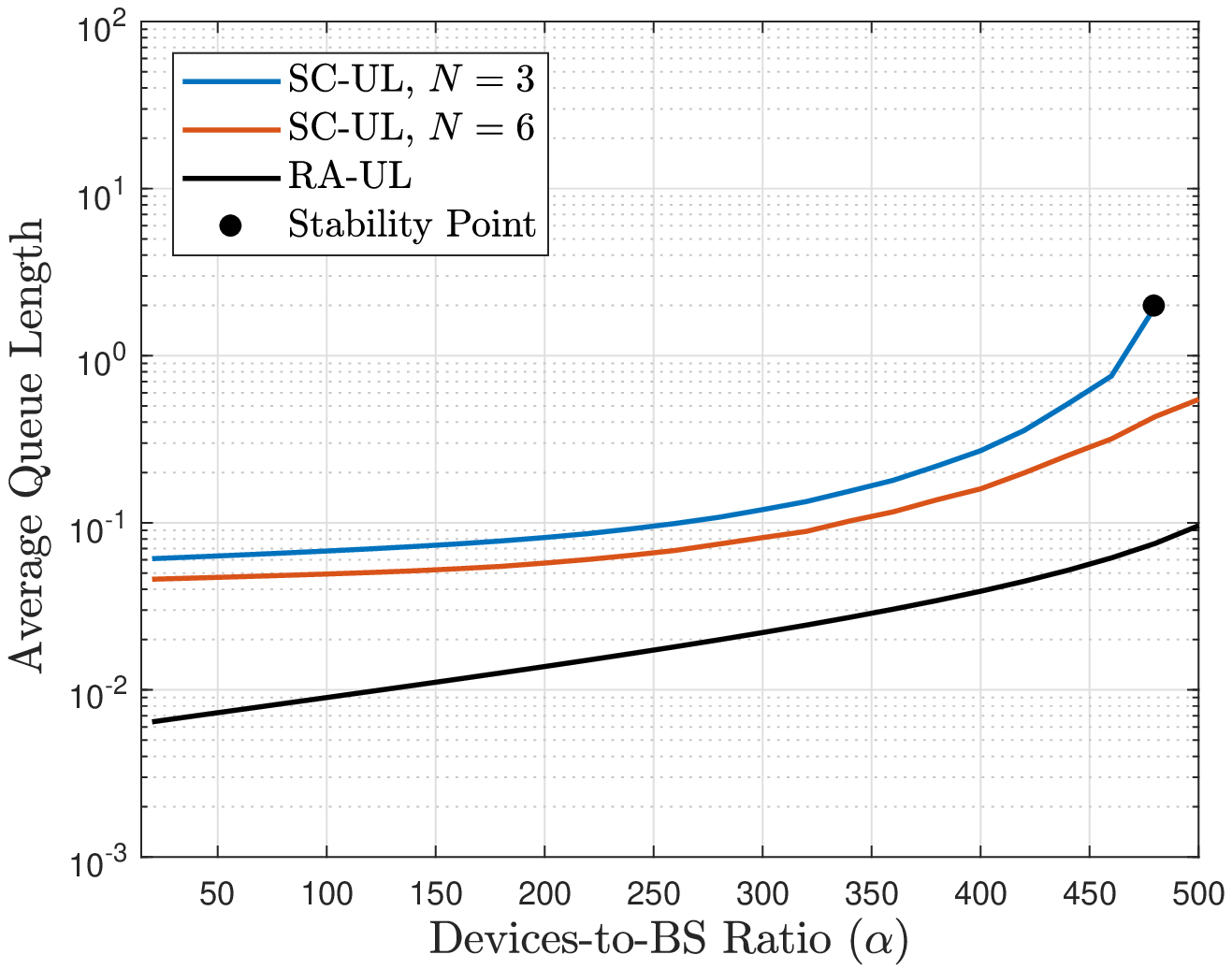}}
	\end{center}
	\vspace{-5mm}
	\caption{ Average queue length for $a=0.1$. }
	\label{fig_queue}
\end{figure}

\begin{figure}
	\begin{center}
		\subfigure[Total Bandwidth $10$ MHz.]{\label{fig:wait1}\includegraphics[width=3.2 in]{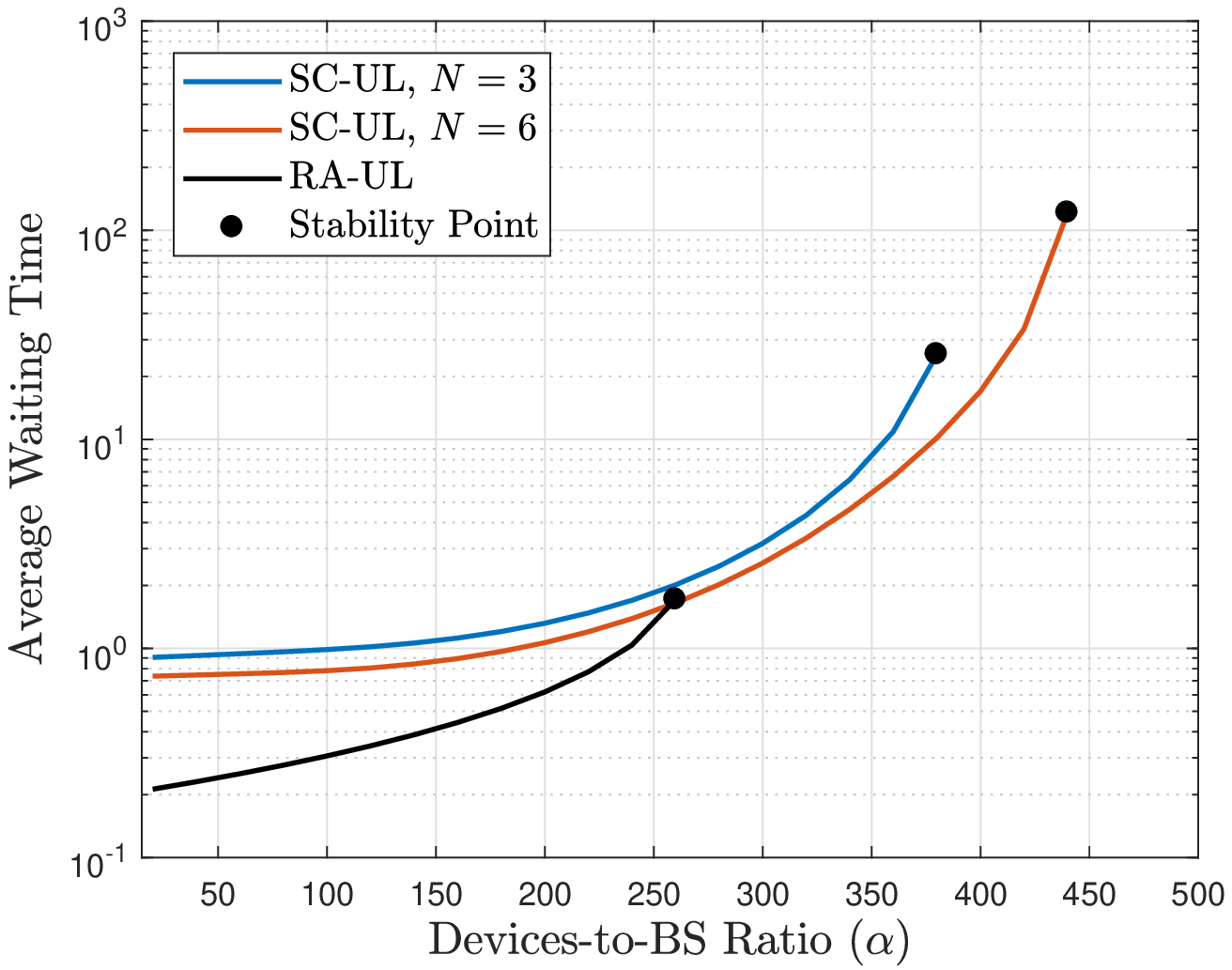}}
		\subfigure[Total Bandwidth $20$ MHz.]{\label{fig:wait2}\includegraphics[width=3.2 in]{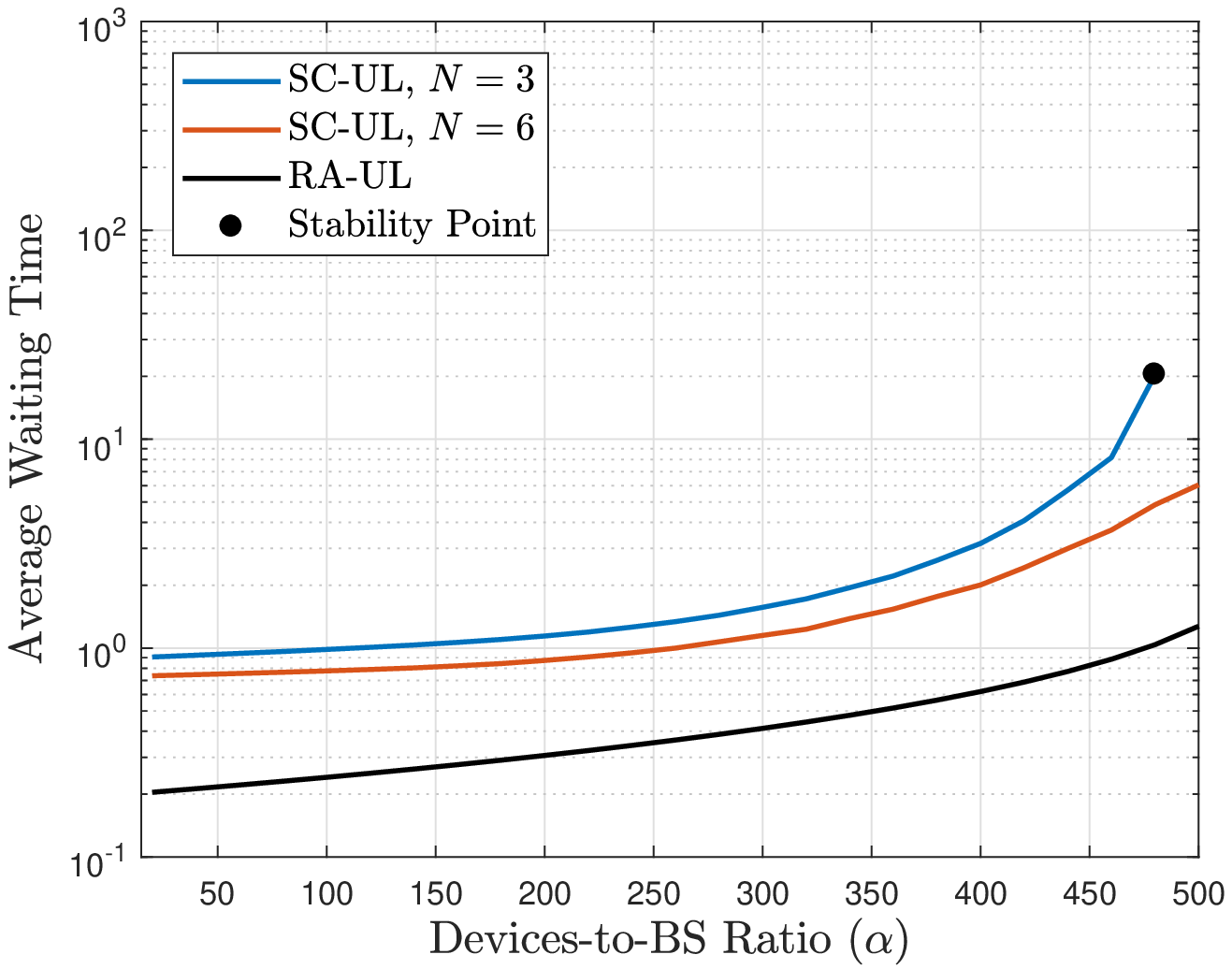}}
	\end{center}
	\vspace{-5mm}
	\caption{ Average queueing waiting time for $a=0.1$.}
	\label{fig_waiting}
\end{figure}

\begin{figure*}
	\begin{center}
		\subfigure[Total Bandwidth $10$ MHz.]{\label{fig:out1}\includegraphics[width=3.2 in , height=2.6 in]{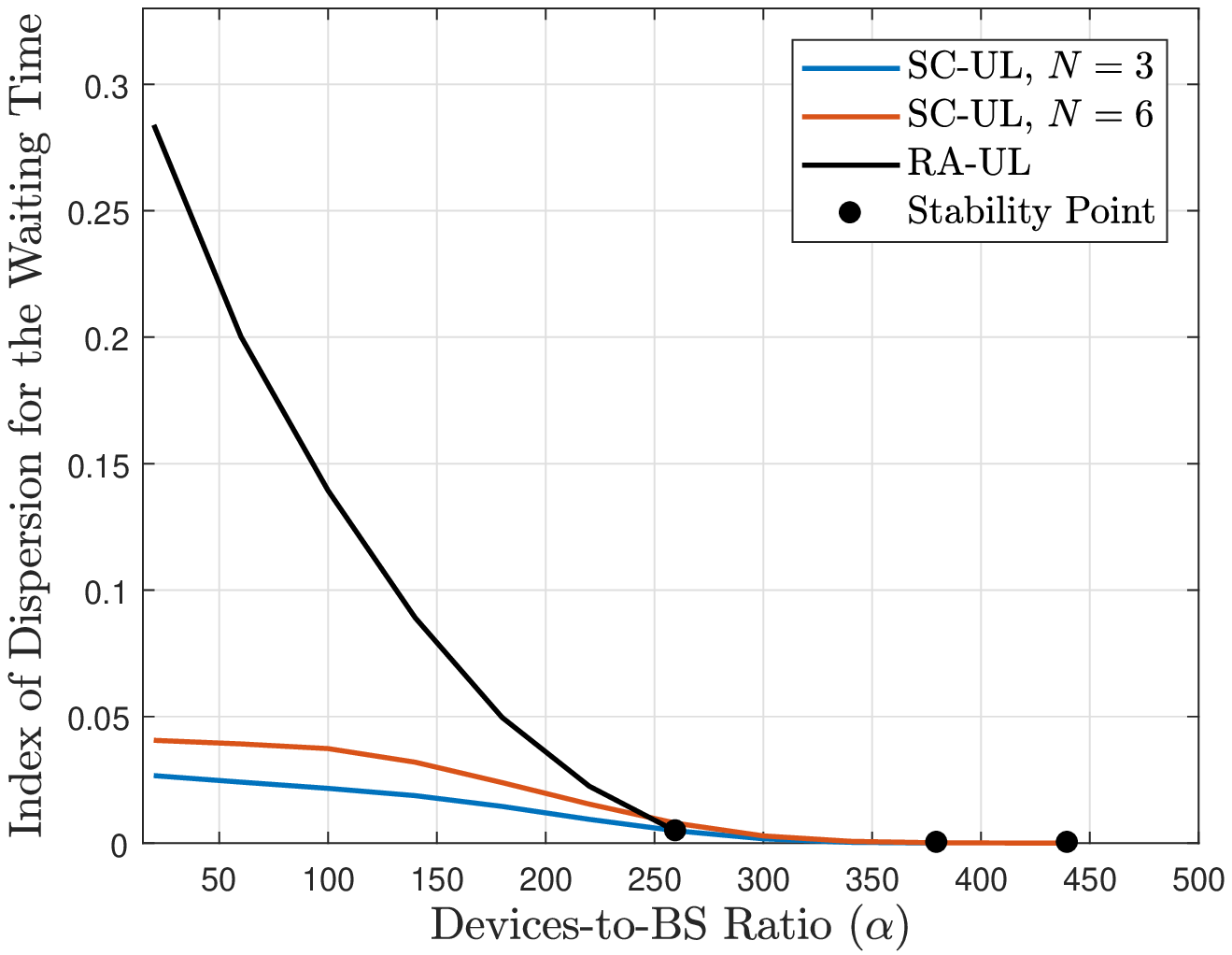}}
		\subfigure[Total Bandwidth $20$ MHz.]{\label{fig:out2}\includegraphics[width=3.2 in, height=2.6 in]{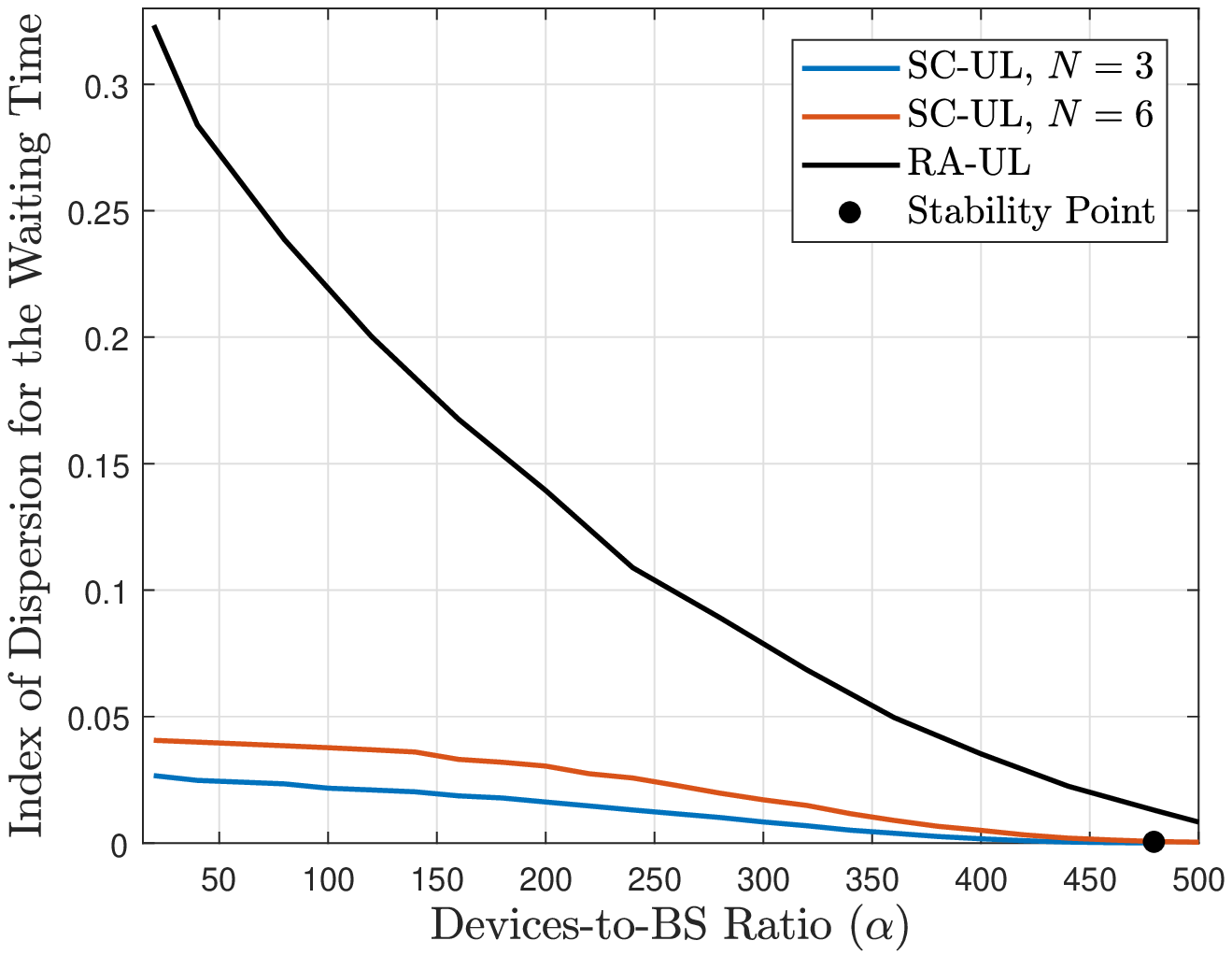}}
	\end{center}
	\vspace{-3mm}
	\caption{ {The index of dispersion for the queueing waiting time for $a=0.1$. }}
	\label{fig_var}
\end{figure*}

\begin{figure*}
	\begin{center}
		\subfigure[Total Bandwidth $10$ MHz.]{\label{fig:out1}\includegraphics[width=3.2 in , height=2.6 in]{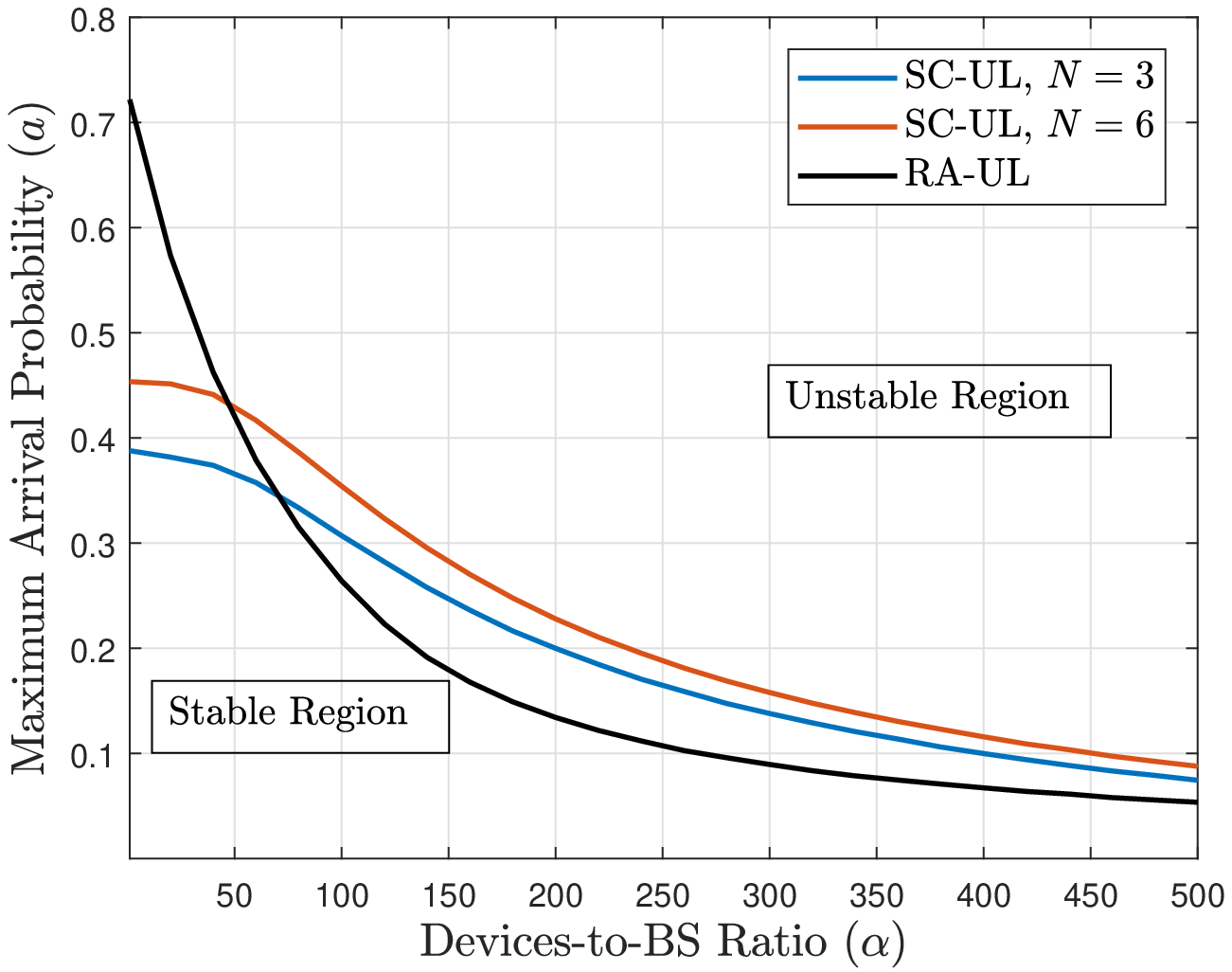}}
		\subfigure[Total Bandwidth $20$ MHz.]{\label{fig:out2}\includegraphics[width=3.2 in, height=2.6 in]{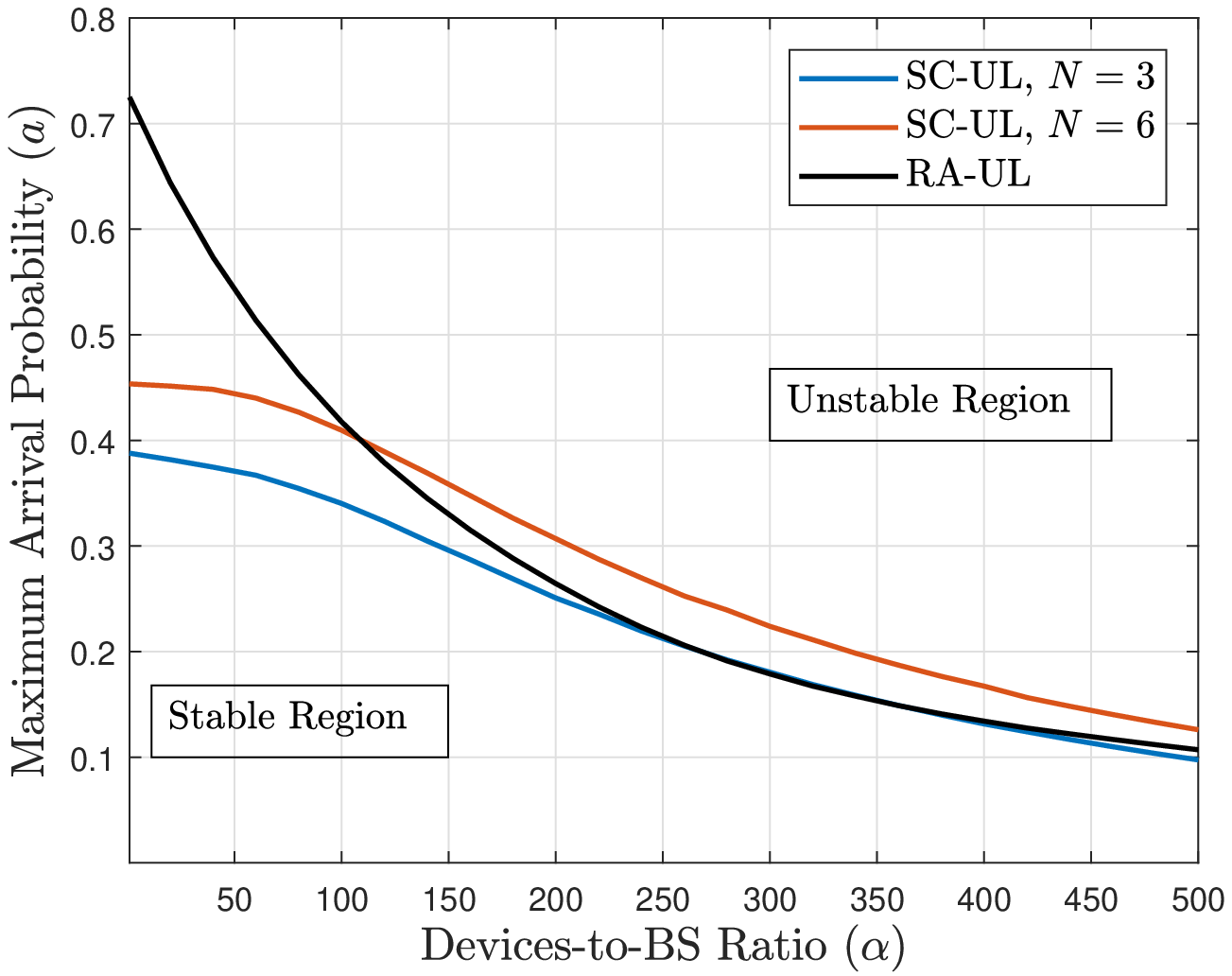}}
	\end{center}
	\vspace{-3mm}
	\caption{  Pareto-frontier of the stability region with respect to the devices intensity. }
	\label{fig_scalability}
\end{figure*}

%


Fig.~\ref{fig_queue} and Fig.~\ref{fig_waiting} show, respectively, the steady-state average queue length $\mathbb{E}\{\mathcal{Q}_L\}$ and  the average waiting time $\mathbb{E}\{\mathcal{W_{\text{q}}}\}$ at stable network operation for the SC-UL and the RA-UL schemes. Comparing both transmission schemes, the figures show that the prompt transmission of the RA-UL scheme offers lower average queue size and delay as long as the network is stable. Hence,  Figs.~\ref{fig_queue}~and~\ref{fig_waiting} support the intuition that prompt transmission of the packets, even without scheduling, expedite packet delivery and helps devices to flush their buffers soon after packets generation.  However, as the devices density increases, the interference becomes overwhelming and scheduling is necessary. Hence, the SC-UL scheme extends the system stability for higher devices density for scarce resources scenario (i.e., the 10 MHz scenario). {Comparing Fig.~\ref{fig_queue} and Fig.~\ref{fig_waiting} also reveals the effect of the {\rm EA-TX} transmission slots ($N$) in the {\rm SC-UL} scheme. The figures show that the higher the $N$ the  {\rm SC-UL} scheme has better average performance in terms of queue length and waiting time. However, this improved performance comes at the expense of a higher index of dispersion for waiting time as depicted in Fig.~\ref{fig_var}. I.e.,  the waiting time for the packets will have higher deviation from the mean $N$ increases.   Fig.~\ref{fig_var} also shows that {\rm RA-UL} scheme generally has higher index of dispersion for the waiting time than the  {\rm SC-UL}. Moreover, the figure shows that the variance decreases as the intensity increases. This is mainly due to the severe interference level at high intensities, and hence, the packets experience significantly large waiting time and low index of dispersion. }


To better compare the scalability of the SC-UL and RA-UL scenarios, the Pareto-frontier of the stability regions for both schemes are shown in Fig.~\ref{fig_scalability}. The stability Pareto-frontier identifies the system parameters that guarantee stable system performance. Operating beyond the stability region lead to unstable queues and unbounded delay. For instance, the instability point in Figs.~\ref{fig_queue}~and~\ref{fig_waiting} occurs at $\alpha = 250$ and $a=0.1$ for the RA-UL in the 10 MHz scenario. This point is located at the stability Pareto-frontiers of the RA-UL scheme in Fig~\ref{fig:out1}. Similarly, Figs.~\ref{fig_queue}~and~\ref{fig_waiting}  show that the SC-UL scheme with $N=3$ and $N=6$ become unstable at devices densities of, respectively,  $\alpha \approx 370$ and $\alpha \approx 440$, at $a=0.1$ in the 10 MHz scenario. Such information can also be extracted from the Pareto-frontiers of the SC-UL scheme for $N=3$ and $N=6$ in Fig~\ref{fig:out1}. Hence, the stability region in Fig~\ref{fig:out1} offers insightful information for the scalability,  and identifies the effective operational scenario, of each transmission  scheme. Having said that, Fig~\ref{fig_scalability} shows that RA-UL offers more scalability in terms of traffic intensity and that SC-UL offers more scalability  in terms of the devices density. Particularly, the RA-UL succeeds to support higher traffic intensity for $\alpha < 50$ [$\alpha < 100$] for the 10 MHz [20 MHz] scenario. In this case, the RA-SR would cause unnecessary delay and it is better to promptly transmit UL data packets without scheduling. When the devices intensity increases, RA-UL would lead to overwhelming interference and scheduling becomes a necessity. Consequently, the SC-UL scheme succeeds to support high devices density that cannot be supported by the RA-UL.

%
%

\section{Conclusions}\label{sec:Conclusions}
One of the key challenges associated with the IoT is tremendous growth in the number of uplink connections. The 3GPP community is seemingly set to continue to pursue a scheduled uplink (SC-UL) transmission paradigm. On the other hand, the LPWA community (e.g., the likes of Sigfox and LoRa) have adopted a random access uplink transmission (RA-UL) paradigm. A legitimate question is: {\em which one of those two paradigms is better in the context of the IoT?} Such a dilemma has been faintly tackled in the literature using dominantly qualitative arguments. This paper, however, provides a concrete framework for tackling the question in hand.
The paper develops an integrated model featuring the use of stochastic geometry and queueing theory for uplink transmissions. We  capture the mutual interference between the IoT devices by utilizing spatially interacting tandem queues model. The performance of both schemes is analyzed in terms of 4 key parameters: transmission success probability, buffer queue length, access delay time, and scalability. The latter is expressed in terms of the IoT device intensity and traffic arrival rates.
The key takeaway is that RA-UL is the best choice for lower device intensities and traffic volumes. Beyond that, SC-UL features a more robust performance. The rationale for that stems from the failure of RA-UL to handle the escalation in intracell interference with growing device intensities and traffic rates. Interestingly, this means that LPWA technologies today are actually being operated in their own ``sweet spot'', i.e., where the number of IoT devices per base station is relatively lower than what LTE base stations are engineered for. As such, RA-UL and SC-UL ought to be perceived as complementary paradigms rather than contentious.

\appendix
\vspace{-5mm}
\subsection{Proof of Lemma 1}\label{sec:AppA}
{Let the number of neighbors in the cell equals to $\mathtt{n}$, then the Complementary Cumulative Distribution Function (CCDF) of the maximum channel gain between $\mathtt{n}$ independent Rayleigh fading channel gains is given by:
	\vspace{-2mm}
\begin{align}\label{eq:AppA_12}
\bar{F}_{h_{max}} \mid_{\mathcal{N}=\mathtt{n}}\left(h\right)&= 1-(1-\exp(-h))^{\mathtt{n}+1}.
\end{align}
Substituting \eqref{eq:AppA_12} in \eqref{SINR_RA} and noting that $\mathbb{P}\left\{h_\circ> h_i \;  \forall h_i \in \bold{h}_o \setminus h_\circ \right\}=1/(\mathtt{n}+1)$ yield to:
\begin{align}\label{eq:AppA_13}
P_{\text{\rm RA}} \mid_{\mathcal{N}=\mathtt{n}}&= \frac{\mathbb{E}_{\mathcal{I}^{(\text{\rm RA})}_{\text{\rm Inter}},\mathcal{I}^{(\text{\rm RA})}_{\text{\rm Intra}}}\left\{1-\left( 1-\exp\left\{ \frac{\theta_{ \text{\rm RA}}}{\rho} \left( \sigma^2+\mathcal{I}^{(\text{\rm RA})}_{\text{\rm Inter}}+\mathcal{I}^{(\text{\rm RA})}_{\text{\rm Intra}} \right) \right\} \right)^{\mathtt{n}+1} \right\} }{\mathtt{n}+1}.
\end{align}
 Because of the independency of the PPP in different regions \cite{ martin_book} and after applying the binomial expansion for the numerator of \eqref{eq:AppA_13}, we get:
 \begin{align}\label{eq:AppA_15}
 P_{\text{\rm RA}} \mid_{\mathcal{N}=\mathtt{n}}&= \frac{\sum\limits_{k=1 }^{\mathtt{n}+1} \binom{\mathtt{n}+1}{k}(-1)^{k+1} \exp \left\{\frac{-k \; \theta_{\text{\rm RA}\; }\sigma^2}{\rho}\right\}\mathscr{L}_{\mathcal{I}^{(\text{\rm RA})}_{\text{\rm Inter}}}\left(\frac{k \; \theta_{ \text{\rm RA}}}{\rho} \right) \mathscr{L}_{\mathcal{I}^{(\text{\rm RA})}_{\text{\rm Intra}}\mid_{\mathcal{N}=\mathtt{n}}} \left(\frac{k \; \theta_{ \text{\rm RA}}}{\rho} \right)}{\mathtt{n}+1}.
 \end{align}}


%
	\vspace{-5mm}
Note that the nearest BS association and the employed power control enforce the following two conditions; (i) the intracell interference from an interfering device is equal to $\rho$, and (ii) the intercell interference from any interfering device is strictly less that $\rho$. The aggregate intercell interference received at the serving BS of the test device is obtained as:
\begin{align}\label{eq:AppA_3}
\mathcal{I}^{(\text{\rm RA})}_{\text{\rm Inter}}=  \sum\limits_{u_i\in \Phi \setminus \{o\} } \mathbbm{1}_{\{P_{i} \parallel u_i\parallel^{-\eta}<\rho\}}P_{i} \; h_i \parallel u_i\parallel^{-\eta}.
\end{align}
Ignoring the correlations between the transmission powers of the devices in the same and adjacent Voronoi cells, the LT of \eqref{eq:AppA_3} can be approximated as:
\begin{align}
&\mathscr{L}_{\mathcal{I}^{(\text{\rm RA})}_{\text{\rm Inter}}}\left(s\right) \approx \exp\left(-2\pi \; \varphi_1   \; \tilde{\mu} \; s^{\frac{2}{\eta}} \;\mathbb{E}_{P}\left\{P^{\frac{2}{\eta}} \;   \right\} \int_{(s\rho)^{\frac{-1}{\eta}}}^{\infty} \frac{y}{y^\eta +1} dy  \right).
\end{align}

\noindent  The LT is obtained by using the probability generating function (PGFL) of the PPP \cite{ martin_book} and following \cite{elsawy2014stochastic}, substituting the value of $\mathbb{E}_{P}\left\{P^{\frac{2}{\eta}}\right\}$ from [Lemma 1,\cite{elsawy2014stochastic}] and evaluating the integral gives:
\begin{align} \label{Eq:laplase_Inter_app}
&\mathscr{L}_{\mathcal{I}^{(\text{\rm RA})}_{\text{\rm Inter}}}\left(\frac{k \; \theta_{ \text{\rm RA}}}{\rho}\right) \approx \exp\left(-2 \; k \; \theta_{ \text{\rm RA}}\; \frac{ \varphi_1  \;\tilde{\mu}}{\lambda} \; \frac{{}_2F_1\left(1,1-\frac{2}{\eta},2-\frac{2}{\eta},-k\;\theta_{\text{\rm SR}}\right)}{\eta-2} \right).
\end{align}

{Let the signal from all the devices in the test cell to be $ \rho \bold{h}_\circ=\{\rho {h}_\circ ,\rho h_1, \rho h_2, \hdots, \rho h_{\mathtt{n}}\}$, and let $y$ be the maximum signal among $\mathtt{n}+1$ devices in the test cell, hence to calculate the Intracell interference conditioned on the number of neighbors $\left(\mathscr{L}_{\mathcal{I}^{(\text{\rm RA})}_{\text{\rm Intra}}\mid_{\mathcal{N}=\mathtt{n}}}\right)$, we first find the LT of the truncated exponential PDF as follows:
\begin{align}\label{Laplace_Intra}
\mathscr{L}_{\mathcal{I}^{(\text{\rm RA})}_{\text{\rm Intra}}\mid_{\mathcal{N}=\mathtt{n}}}\left(s\right)= \int_{0}^{y} \frac{\frac{1}{\rho} \exp(-\frac{x}{\rho})}{1-\exp(-\frac{y}{\rho})} \exp(-sx ) \; dx=\frac{\exp(\frac{y}{\rho})-\exp(-sy)}{(1+s\rho)(\exp(\frac{y}{\rho})-1)}.
\end{align}
By deconditioning on the PDF of $y$ which has the form of $(\mathtt{n}+1)(1-\exp(-\frac{y}{\rho}))\exp(-\frac{y}{\rho})$ we get:
\begin{align}
\mathscr{L}_{\mathcal{I}^{(\text{\rm RA})}_{\text{\rm Intra}}\mid_{\mathcal{N}=\mathtt{n}}}\left(\frac{k \; \theta_{ \text{\rm RA}}}{\rho}\right)&=\frac{\mathtt{n}+1}{1+k \; \theta_{\text{\rm SR}}} \left(\frac{1}{\mathtt{n}} -\frac{\Gamma(\mathtt{n}) \;\Gamma(2+k \; \theta_{\text{\rm SR}})}{\Gamma(2+N+k \; \theta_{\text{\rm SR}})} \right).
\end{align}
After applying the law of total probability, \eqref{eq:RA_success} in Lemma \ref{lemma_RA_success} is obtained.}

\bibliographystyle{IEEEtran}
\bibliography{refrences,IEEEabrv}
\vfill
\end{document}